\documentclass[11pt]{article}
\usepackage[utf8]{inputenc}
\usepackage[T1]{fontenc}
\usepackage{lmodern}
\usepackage{microtype}
\usepackage{amsmath,xcolor}
\usepackage{amssymb}
\usepackage{amsthm}
\usepackage{bm,float}
\usepackage{dsfont}
\usepackage{fullpage,caption,wrapfig}
\usepackage{comment}
\usepackage{mathtools}
\usepackage[shortlabels]{enumitem}
\usepackage{complexity}
\usepackage[backend=bibtex, style=alphabetic, backref=true,url=false, giveninits=true,maxcitenames=5, maxbibnames=99]{biblatex} 
\renewbibmacro{in:}{%
  \ifentrytype{article}{}{\printtext{\bibstring{in}\addcolon\space}}}

\renewbibmacro*{volume+number+eid}{%
  \printfield{volume}%
%  \setunit*{\adddot}% DELETED
  %\setunit*{\addnbthinspace}% NEW (optional); there's also \addnbthinspace
  \printfield{number}%
  \setunit{\addcomma\space}%
  \printfield{eid}}
\DeclareFieldFormat[article]{number}{\mkbibparens{#1}}

\renewbibmacro*{issue+date}{%
  \setunit{\addperiod\space}% NEW
%  \printtext[parens]{% DELETED
    \iffieldundef{issue}
      {\usebibmacro{date}}
      {\printfield{issue}%
       \setunit*{\addspace}%
%       \usebibmacro{date}}}% DELETED
       \usebibmacro{date}}% NEW
  \newunit}

\AtEveryBibitem{% Clean up the bibtex rather than editing it
 \clearlist{address}
 \clearlist{location}
 \clearfield{month}
 \ifentrytype{inproceedings}{\clearname{editor}}{}
}

\usepackage{bbm}

\usepackage{nag,tikz}
\usetikzlibrary{calc}
\usetikzlibrary{decorations.pathreplacing}
\usetikzlibrary{shapes, patterns, decorations, fit, intersections, arrows, automata, positioning}

\usepackage{algorithm,algpseudocode}
\usepackage{multicol}
%\usepackage[scaled]{helvet} % ss
%\usepackage{courier} % tt
%\normalfont
%\usepackage[T1]{fontenc}
\usepackage{thmtools} 
\usepackage{hyperref}
\hypersetup{
    colorlinks=true,
    linkcolor=violet,
    filecolor=magenta,      
    urlcolor=cyan,
    citecolor=blue,
    pdffitwindow=true,
}\usepackage[capitalize, nameinlink]{cleveref}

\algrenewcommand\algorithmicrequire{\textbf{Input:}}
\algrenewcommand\algorithmicensure{\textbf{Output:}}

% --- Theorems

\theoremstyle{plain}
\newtheorem{theorem}{Theorem}[section]
\newtheorem{lemma}[theorem]{Lemma}

\theoremstyle{definition}
\newtheorem{definition}[theorem]{Definition}

\theoremstyle{remark}

\newcommand{\emphdef}[1]{\emph{#1}}

\newcommand{\old}[1]{}

\newcommand{\rank}{\text{rank}}

\DeclareMathOperator{\polyloglog}{polyloglog}

% math notation
\newcommand{\Rplus}{\ensuremath{\mathbb R}_{\geq 0}}
\renewcommand{\R}{\ensuremath{\mathbb R}}
\newcommand{\Z}{\ensuremath{\mathbb Z}}

%\newcommand{\I}[1]{ {\mathbb{I}}\left\{#1\right\} }

% extra math

\newcommand{\lamconstrainedbasis}{\textsc{LamConstrainedBasis}}
\DeclareMathOperator{\Span}{span}
\newcommand{\indset}[1]{\chi(#1)}

\def\b1{{\bf 1}}
\def\1{{\bf 1}}

\def\cI{{\cal I}}

\def\cL{{\cal L}}

\def\R{\mathbb{R}}

\def\cP{{\cal P}}
\def\cQ{{\cal Q}}
\def\cI{{\cal I}}

\newcommand{\matroid}{\mathcal{M}}
\newcommand{\baseP}[1]{P_{#1}}

\newcommand{\Ldeg}{\mathcal{L}}

\newcommand{\Ltight}{\mathcal{L}_{\mathrm{tight}}}
\newcommand{\LdegB}{\mathcal{L}_{\mathrm{basis}}}

\newcommand{\Cbasis}{\mathcal{C}_{\mathrm{basis}}}
\newcommand{\Pst}{P_{\mathrm{st}}}
\newcommand{\Pmat}{P_{\Matroid}}

\newcommand{\Pstdom}{\Pst^{\uparrow}}
\newcommand{\Matroid}{\mathcal{M}}
\DeclareMathOperator{\rnk}{rank}

\newcommand{\symdiff}{\mathbin{\triangle}}
\newcommand{\delt}[2]{|#2 \cap \delta(#1)|}

\renewcommand{\smallsetminus}{\setminus}

\newcommand{\deltaP}{\delta}

\newcommand{\declareperson}[1]{\expandafter\newcommand\csname#1\endcsname[1]{\begingroup\em\textcolor{orange}{#1: ##1}\endgroup}}

\declareperson{Nathan}
\addbibresource{tsp.bib}

%\newcommand{\equation}[1]{\begin{itemize} #1 \end{itemize}}
% END MACRO DEFINITIONS

% uncomment this if you want numeration like Lemma 1 rather than Lemma 1.1 (or worse, 0.1 if there is no section)
%\renewtheorem{theorem}{Theorem}%[section]

\hypersetup{ colorlinks=true, linkcolor=black!40!blue, filecolor=magenta, urlcolor=blue, }

%%%%
%%%%

%\definecolor{amber}{rgb}{1.0, 0.75, 0.0}
\definecolor{arylideyellow}{rgb}{0.91, 0.84, 0.42}
\definecolor{darkgreen}{rgb}{0.01, 0.75, 0.24}
\definecolor{darkred}{rgb}{0.76, 0.13, 0.28}
%\pagecolor{arylideyellow}

%%%%
%%%%

\begin{document}

%\title{The thin tree conjecture is true for laminar families}
\title{Thin trees for laminar families}
%\title{Laminar-Constrained Spanning Trees}
%\author{}
\author{Nathan Klein\thanks{\href{mailto:nwklein@cs.washington.edu}{nwklein@cs.washington.edu}. Research supported in part by Air Force Office of Scientific Research grant FA9550-20-1-0212 and NSF grants DGE-1762114 and  CCF-1813135.}\\ University of Washington \and  Neil Olver\thanks{\href{mailto:n.olver@lse.ac.uk }{n.olver@lse.ac.uk}. Research supported by NWO Vidi grant 016.Vidi.189.087.}\\ London School of Economics\\ and Political Science}

%\author{Anonymous}

\maketitle

\begin{abstract}
In the laminar-constrained spanning tree problem, the goal is to find a minimum-cost spanning tree which respects upper bounds on the number of times each cut in a given laminar family is crossed. 
This generalizes the well-studied degree-bounded spanning tree problem, as well as a previously studied setting where a chain of cuts is given. 
We give the first constant-factor approximation algorithm;  
%	We show the first constant factor approximation algorithm for the laminar-constrained spanning tree problem, in which we want to find a spanning tree which respects upper bounds on a laminar family of cuts. 
        in particular we show how to obtain a multiplicative violation of the crossing bounds of less than 22 while losing less than a factor of 5 in terms of cost.   

    Our result compares to the natural LP relaxation.
    As a consequence, our results show that 
    given a $k$-edge-connected graph and a laminar family $\cL \subseteq 2^V$ of cuts, there exists a spanning tree which contains only an $O(1/k)$ fraction of the edges across every cut in $\cL$. 
    This can be viewed as progress towards the \emph{Thin Tree Conjecture}, which (in a strong form) states that this guarantee can be obtained for all cuts simultaneously.
\end{abstract}

\section{Introduction}\label{sec:introduction}

Let $G=(V,E)$ be a connected undirected graph. %, and let $k$ be its edge connectivity. 
Given any proper $S \subset V$, we use $\delta(S)$ to denote the cut with shores $S$ and $V \setminus S$.
A spanning tree $T$ of $G$ is called \emph{$\alpha$-thin} if the number of edges of $T$ crossing any given cut of $G$ is at most an $\alpha$ fraction of the total number of edges: $|T \cap \delta(S)| \leq \alpha |\delta(S)|$ for each $S \subseteq V$.

In 2004, Goddyn~\cite{God04} made the following conjecture: there exists a function $f: \Z_+ \to [0,1]$ with $\lim_{k \to \infty} f(k)/k = 0$ such that every $k$-edge-connected graph $G$ has an $f(k)$-thin spanning tree.
This has become known as the \emph{thin tree conjecture}, and it remains open despite substantial efforts.

A natural strengthening of the conjecture, which we will refer to as the \emph{strong thin tree conjecture} makes the same claim, but for $f(k) = C/k$ for some constant $C$. This conjecture is found explicitly in \cite{AGMOS17}
and is the best that one could hope for up to constant factors; clearly no $k$-edge-connected graph has an $\alpha$-thin tree for any $\alpha < 1/k$.
In a different direction, there is also an algorithmic question one can ask: if a thin tree always exists, can we \emph{find} one in polynomial time? 

The thin tree conjecture has some nice implications. It implies the \emph{weak 3-flow conjecture} of Jaeger~\cite{Jae84}.
This has since been resolved, by Thomassen~\cite{Thom12}, however this would provide an alternate proof. 
Another application lies in the asymmetric traveling salesman problem (ATSP). As shown by Asadpour, Goemans, Madry, Oveis Gharan and Saberi~\cite{AGMOS17,OS11}, if the constructive form of the strong thin tree conjecture is true, 
%i.e., if one can efficiently compute an $O(1/k)$-thin tree for any $k$-connected graph, 
it would yield an $O(1)$-approximation algorithm to ATSP. 
This has since been resolved by Svensson, Tarnawski and V\'egh~\cite{STV18} using completely different methods.
Nonetheless, a new algorithm stemming from thin trees would be of significant interest. 
Furthermore, a constant factor approximation algorithm to the \emph{bottleneck} version of the asymmetric traveling salesman problem, where the goal is to minimize the longest edge in the tour rather than the sum, is not known. 
This would follow from the constructive form of the thin tree conjecture~\cite{AKS21}. 
%\nnote{Check this paper more carefully; they don't say this very explicitly, but I understand from Laci that this implication holds.} See Theorem 3 in their paper.
 
%\nnote{This is a little out of place here. I think I'd rather have it in the related work section, but if not, need to integrate this a bit more.}

Although the (strong) thin tree conjecture would no longer imply breakthroughs to these other problems, it remains a natural question in its own right.
Turning things around, the positive resolution of these implications can perhaps be viewed as some weak evidence for the conjecture.

For the following discussion, it is useful to observe that the strong thin tree conjecture has the following equivalent formulation.
Suppose we are given a graph $G$ as well as a point $x$ in the spanning tree polytope (that is, a convex combination of characteristic vectors of edge sets of spanning trees of $G$).
We say that a spanning tree $T$ is \emph{$\alpha$-thin with respect to $x$} if $|T \cap \delta(S)| \leq \alpha x(\delta(S))$ for every $S \subseteq V$.
The conjecture is that there is a universal constant $\alpha$ such that an $\alpha$-thin tree with respect to $x$ always exists, for any instance and point in the spanning tree polytope.
The equivalence follows from the observation that the point $x'$ defined by $x'_e = 2/k$ for all $e \in E$ is in the dominant of the spanning tree polytope for every $k$-edge-connected graph $G$, and so there is a point $x$ in the spanning tree polytope with $x_e \leq 2/k$ for all $e$.
An $\alpha$-thin tree with respect to $x$ is then a $(2\alpha/k)$-thin tree for the graph.
%\nnote{The ATSP paper uses $\alpha$-thin in the same way, so I'm OK happy with it now.}

\paragraph{Progress on the thin tree conjecture.}
The conjecture is known to hold for some graph classes, most notably planar and bounded genus graphs~\cite{OS11}.
For general graphs, the best known result is that there always exists an $O(\frac{\polyloglog n}{k})$-thin tree in any $k$-connected graph \cite{AO15}.
This is non-constructive; constructively, the best known is only $O(\frac{\log n}{\log \log n\cdot k})$-thinness.

One difficulty with the constructive form of the conjecture is that it's not even clear how to \emph{check} if a given tree $T$ is $\alpha$-thin, or even $O(\alpha)$-thin. 
Nor do we know of a polynomially checkable certificate that can certify thinness.
The problem, of course, is that there are an exponential number of cuts to be concerned with. 
An easier question presents itself: what if we consider an explicitly given family of cuts, and require the thinness condition $|T \cap \delta(S)| \leq \alpha |\delta(S)|$ only for these specific cuts?
And one step further: what if we consider a family of cuts with some specific structure? %\Nathan{Is it known whether it's NP-Hard? Someone at ICERM said it's related to sparsest cut. }
%\nnote{Yes I believe it's NP-hard for a given tree. I should double check the details. But this doesn't rule out producing a thin tree along with a certificate of thinness. So perhaps the above can be phrased a bit better.}

\paragraph{Explicitly given cut collections.}
Related questions have been considered from an algorithmic perspective already, independently from the thin tree conjecture.
The first class considered was that of \emph{singleton cuts}.
Suppose we are given an integer-valued degree bound $b_v$ for each node $v$ of the graph $G$. 
The degree bounded spanning tree problem asks for a spanning tree satisfying these bounds, if such a spanning tree exists.
This problem is easily seen to be NP-hard, since it captures the question of finding a Hamiltonian path with a specified start and end node.
So it is necessary to allow for some relaxation of the degree bounds.
F\"urer and Raghavachari~\cite{FR92} showed that relaxing the degree bounds by $1$ additively suffices.
That is, they showed how to efficiently find a spanning tree $T$ satisfying $|T \cap \delta(v)| \leq b_v + 1$ for all $v \in V$, if there exists a spanning tree $T^*$ that satisfies the degree bounds exactly. %\Nathan{Technically I think their paper just deals with the maximum degree.}\nnote{Singh-Lau credit F-R for the stronger result, so I guess it's OK to leave it like this. But OK, maybe we should double check that they can get this; leaving it for now.}

One can also consider a minimum cost version of the question.
Now each edge $e \in E$ has a nonnegative cost $c(e)$, and the goal is to find a \emph{cheapest} spanning tree satisfying the degree bounds (again, assuming one exists).
Goemans~\cite{Goe06} showed how to efficiently find a spanning tree $T$ which violates the degree bounds by at most an additive 2, and satisfies $c(T) \leq c(T^*)$, where $T^*$ is a minimum cost spanning tree that satisfies all the degree bounds exactly.
Singh and Lau~\cite{SL15} then showed how to improve the degree violation to just $1$, while maintaining the same bound on the cost.
They use the method of \emph{iterative relaxation}; we use iterative relaxation as well, so we will discuss this further in the sequel.

That ends the story for degree bounds; what about other families of constraints?
So we have a given family $\mathcal{F}$ of subsets, and a ``degree bound'' $b_S$ for each $S \in \mathcal{F}$.
Olver and Zenklusen~\cite{OZ13} showed how to obtain, constructively, a constant multiplicative violation of all cut constraints if $\mathcal{F}$ is a \emph{chain}; that is, $\mathcal{F} = \{S_1, S_2, \ldots, S_t\}$ with $S_1 \subsetneq S_2 \subsetneq \cdots \subsetneq S_t$.
Linhares and Swamy~\cite{LS16} showed that a minimum cost version of this result also holds, if one allows a constant factor approximation in the cost as well as in the cut constraints.

All of these results compare to the natural fractional relaxation.
That is, they do not require that there is an actual tree satisfying the degree bounds, merely that there is a point in the spanning tree polytope which does.
As such, we can view them in the context of thin trees. 
They show that weaker versions of the strong thin tree conjecture hold, where the cut bounds are enforced only on singleton cuts, or only on a chain of cuts.
We will say that the strong thin tree conjecture holds for a given family $\mathcal{F}$ if given any $x$ in the spanning tree polytope, there is a spanning tree $T$ satisfying $|T \cap \delta(S)| \leq O(1) x(\delta(S))$ for all $S \in \mathcal{F}$.

%We suppose $T^*$ is a feasible solution (and a minimum cost one if there are costs); that is, $|T^* \cap \delta(S)| \leq b_S$ for each $S \in \mathcal{F}$.
%A generalization of the approach of Lau and Singh~\cite{LS} shows that if every edge is in at most $\Delta$ cuts in the given family, then a tree $T$ can be efficiently found where all degree bounds are additively violated by at most $O(\Delta)$. \todo{...}
%
%\todo{Chain, laminar.}

%Bansal et al.~\cite{TODO} showed that if we have a laminar family of cuts, then an additive violation of all 

\paragraph{Our results.}
Given that the strong thin tree conjecture holds for the family of singletons, and for a chain family, a very natural question presents itself. 
Suppose that $\Ldeg$ is an arbitrary \emph{laminar family} of subsets of $V$; that is, for every $S,T \in \Ldeg$, $S \cap T$
 is either equal to $\emptyset,S$, or $T$.
Does the strong thin tree conjecture hold for $\Ldeg$?

We show that this is indeed true.
Further, our proof is constructive, and allows for costs.
More precisely, given arbitrary nonnegative edge costs, our returned tree has cost within a constant factor of the cost of the starting fractional solution $x$.

%Given a connected graph $G=(V,E)$ and a laminar family $\lam$ of subsets of $V$ (that is, for every $S,T \in \lam$, $S \cap T \in \{\emptyset, S, T\}$). 

%\paragraph{Our approach.}
We briefly sketch our high-level approach, leaving a full overview until \Cref{sec:overview}.
As already mentioned, iterative relaxation has been applied very successfully to the degree-bounded spanning tree problem, so it is a natural candidate approach.
However, there is an immediate obstruction.
Iterative relaxation for degree-bounded spanning tree is fairly insensitive to the use of the graphic matroid; it works just as well (essentially without changes) if the graphic matroid is replaced by any other matroid.\footnote{With the notable exception of \cite{LS16} which solves the bounded degree spanning problem with an additive error of 1 on both lower \textit{and} upper bounds. When translated to the general matroid setting, the additive error is only known to be 2 \cite{KLS08}.}
However, the matroid generalization of the laminar-constrained spanning tree problem does \emph{not} have a constant integrality gap, and even obtaining a constant factor multiplicative violation is hard. This was shown by Olver and Zenklusen~\cite{OZ13} already for the chain case.
So any successful approach will need to exploit the graphic matroid specifically; it is not clear how to do this directly with iterative relaxation.

We manage to bypass this obstruction and make use of iterative relaxation. 
We do this by first reducing to a special class of instances that we call $\Ldeg$-\emph{aligned}, where the fractional solution $x$ has the property that for every set $S$ in the laminar family of constraints $\Ldeg$, the restriction of $x$ to $S$ is a point in the base polytope of the graphic matroid for the graph restricted to $S$. 
Our reduction crucially exploits properties of spanning trees, and does not apply to general matroids.
We then give an iterative relaxation proof of this $\Ldeg$-aligned case.
This part \emph{does} generalize to arbitrary matroids.

\paragraph{Other related work.} For laminar families, the most directly comparable work is from 2013 by Bansal, Khandekar, Könemann, Nagarajan, and Peis. They give an \textit{additive} $O(\log n)$ approximation for the laminar constrained spanning tree problem \cite{BKKNP13}, improving upon an earlier more general result which given a family of $m$ constraints obtains a violation of $(1+\epsilon)b+O(\frac{1}{\epsilon}\log m)$ for each bound $b$ \cite{CVZ10}.
%\todo{All results on thin trees (including special graph classes). Spectral thinness.}
As previously mentioned, Olver and Zenklusen~\cite{OZ13} demonstrated a constant factor multiplicative violation for a family of cuts given by a chain. 
These three results are with respect to the fractional relaxation, and thus also solve the related thin tree problems. 
N\"agele and Zenklusen~\cite{NZ19} demonstrated that in quasi-polynomial time the violation for the chain-constrained spanning tree problem can be improved to a $(1+\epsilon)$ multiplicative factor, for any $\epsilon > 0$.
They further generalize this slightly towards laminar families, by allowing for a family of cuts that form a laminar family of constant \emph{width}, meaning that the maximum number of disjoint sets in the laminar family is bounded by a constant. 
(Put differently, the number of leaves in the tree representing the laminar family is constant).
However, this result is not based on the LP relaxation, and so does not imply anything for the strong thin tree conjecture for chains or constant-width laminar families.

This problem has also been studied for general matroids. 
Kir\'aly, Lau and Singh~\cite{KLS08} showed that given a matroid $\Matroid$ and a collection of upper bound constraints, one can achieve an additive violation of $\Delta-1$ for all constraints, so long as every element of the matroid is in at most $\Delta$ constraints. They achieve a similar guarantee if lower bounds (or both lower and upper bounds) are present. Similar results and further  generalizations can be found in \cite{CVZ10, BKKNP13}.

Pritchard~\cite{Pri11} conjectured that every $k$-edge-connected graph contains a spanning tree after whose deletion the graph remains $k-f(k)$ connected, where $f(k)$ is any function for which $\lim_{k \to \infty} f(k)/k=0$. This can easily be seen as a weakening of the thin tree conjecture. The strong version of this conjecture (which is a consequence of the strong thin tree conjecture) is that $f(k)$ is an absolute constant. Currently the best known bound for this problem (to the best of our knowledge) is $f(k) = \lfloor \frac{k}{2} \rfloor -1$ by the Nash-Williams theorem \cite{NW61}.%, as we can pack at least $\lfloor \frac{k}{2} \rfloor$ edge-disjoint spanning trees in any $k$-edge-connected graph. 

%Whereas the strong thin tree conjecture asks for at most $(\alpha/k) |\delta(S)|$ edges across cut $\delta(S)$, for some constant $\alpha$, 
%Pritchard's conjecture requires that the tree contain at most $|\delta(S)| - k +  \alpha'$ edges across every cut $\delta(S)$ with $|\delta(S)| \geq k$, for some constant $\alpha'$.
%The former is stronger if we set $\alpha' = \alpha$.

There is a natural \emph{spectral} strengthening of thin trees.
Let $L_H$ denote the Laplacian of a graph $H$, and let $\preceq$ denote the L\"owner ordering on symmetric matrices\footnote{That is, $A \preceq B$ if $B-A$ is positive semidefinite.}.
We say $T$ is \emph{$\alpha$-spectrally-thin} if $L_T \preceq \alpha L_G$; that is, if $z^T L_Tz \leq \alpha z^T L_G z$ for any vector $z \in \mathbb{R}^V$.
This is a stronger condition than $\alpha$-thinness, as can be seen by choosing $z$ to be the characteristic vector of a set $S \subseteq V$.
A big advantage of spectral thinness is that it can be efficiently checked.
A natural analogue of the strong thin tree conjecture, where connectivity is replaced by the minimum effective conductance, can be derived~\cite{HO14} as a consequence of results on the the Kadison-Singer problem~\cite{MSS13}.
This demonstrates that the strong thin tree conjecture holds for edge transitive graphs (or any graph where the minimum edge conductance is within a constant factor of the connectivity).
Unfortunately, spectral thinness is too strong a property to directly aid in proving the (strong or weak) thin tree conjecture in general; there are instances where no $o(\sqrt{n}/k)$-spectrally thin tree exists~\cite{HO14,Goe12}.
Nonetheless, spectral approaches have been fruitful.
The current best result by Anari and Oveis Gharan~\cite{AO15} mentioned previously, that $O(\log \log n/k)$-thin trees exist, makes use of spectral methods in a sophisticated way.
Our approach on the other hand is completely combinatorial; we will not make use of any spectral techniques.

\section{Preliminaries and Results}\label{sec:preliminaries}

\subsection{Notation}
Given a graph $G=(V,E)$ and a subset $S\subseteq V$, let $\delta(S)= \{\{u,v\}: |\{u,v\}\cap S|=1\}$ denote the set of edges with exactly one endpoint in $S$. Let $G[S]$ denote the induced graph of $G$ whose vertex set is $S$, and let $E(S) \subseteq E$ denote the set of edges in $G[S]$. 
For $\cP=\{P_1,\dots,P_k\}$ a partition of a subset of the vertices of $G$, we let $\deltaP(\cP)$ denote the set of edges with endpoints in two different sets $P_i$. 
If the choice of $G$ is not clear, we may write, e.g., $\delta_G(S)$ or $\deltaP_G(\cP)$.

%For any two sets of edges $F, T\subseteq E$, we write 
%\hypertarget{T}{
%\[ F_T := |F \cap T|
%\]}
%We will primarily use this notation to denote the number of edges $F$ has in a spanning tree (or union of spanning trees) $T$. 

For any edge weight function $x: E \to \R$, we write $x(F) := \sum_{e\in F} x(e)$. For $F \subseteq E$, we write $x_{\mid F}$ to denote $x$ restricted to $F$. 

\subsection{Polyhedral Background} 

Edmonds \cite{Edm70} gave the following description for the convex hull of the spanning trees of any graph $G=(V,E)$, known as the {\em spanning tree polytope}.
\hypertarget{stpolytope}{\begin{equation}
        \Pst(G) = \bigl\{ x \in \Rplus^E : x(E) = |V|-1, \; x(E(S)) \leq |S|-1 \; \forall S \subseteq V\bigr\}.
%\begin{aligned}
%\Pst = \begin{cases}
%x(E) = |V|-1 \\
%x(E(S)) \leq |S|-1 &  \forall S\subseteq V\\
%x_e \geq 0 & \forall e\in E.
%\end{cases}
%\end{aligned}
\label{eq:spanningtreelp}
\end{equation}}
The following is the natural LP relaxation for the problem given in \cref{def:problem}. 
\begin{equation}\label{eq:lp}
\begin{aligned}
	\min \quad& \sum_{e \in E} x_e c_e& \\
	\text{s.t.,} \quad & x(\delta(S)) \leq b_S &\forall S\in \Ldeg,\\
	& x \in \Pst(G) \\
\end{aligned}	
\end{equation} 

%\hypertarget{dominate}{\begin{definition}[Domination]\label{def:dominate}
%For two  $x,x' \in \R^{E}$, we say $x$ \textbf{dominates} $x'$ if $x-x' \ge 0$. 
%\end{definition}}

For two  $x,x' \in \R^{E}$, we say $x$ \textbf{dominates} $x'$ if $x-x' \ge 0$. 
\hypertarget{pstdom}{Let $\Pstdom(G)$ denote the dominant of the spanning tree polytope of $G$, that is, the set of points in $x \in \R^{E}$ which dominate some point in $\Pst(G)$.} $\Pstdom(G)$ has the following characterization:
\hypertarget{stpolytope}{\begin{equation}
        \Pstdom(G) = \bigl\{ x \in \Rplus^E : x(\deltaP(\cP)) \geq |\cP|-1 \quad \forall \text{ partitions } \cP \text{ of } V \bigr\}.
%\begin{aligned}
%\Pstdom = \begin{cases}
%\sum_{P_i \in \cP} x(\delta(P_i)) \ge |\cP|-1 &  \forall \text{ partitions $\cP$ of $V$}\\
%x_e \geq 0 & \forall e\in E.
%\end{cases}
%\end{aligned}
\label{eq:spanningtreelp}
\end{equation}}
%where as described above, $\deltaP(\cP)$ denotes the set of edges with endpoints in distinct parts of $\cP$.
%
It is well-known that $\Pstdom(G)$ can be separated efficiently.
\begin{theorem}[\cite{Bar92}]\label{thm:domsep}
    Given a graph $G=(V,E)$ and a point $x \in \Rplus^{E}$, A partition $\cP$ of $G$ minimizing $x(\deltaP(\cP)) - (|\cP|-1)$ can be found in polynomial time.
\end{theorem}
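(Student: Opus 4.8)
The plan is to reduce the minimization over partitions to a single submodular-function minimization (equivalently, a bounded number of minimum-cut computations), after rewriting the objective so that the ``$-(|\cP|-1)$'' term is absorbed into a supermodular function on the blocks. For a partition $\cP = \{P_1, \dots, P_k\}$ of $V$ the non-crossing edges are exactly $\bigcup_i E(P_i)$, a disjoint union, so $x(\deltaP(\cP)) = x(E) - \sum_i x(E(P_i))$ and therefore
\[
  x(\deltaP(\cP)) - (|\cP|-1) \;=\; \bigl(x(E)+1\bigr) - \sum_{i=1}^{k}\bigl(x(E(P_i))+1\bigr).
\]
Since $x(E)+1$ is independent of $\cP$, minimizing the left side over partitions of $V$ is equivalent to maximizing $\sum_i g(P_i)$ over partitions $\cP$ of $V$, where $g(S):=x(E(S))+1$ for nonempty $S\subseteq V$. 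As $x\ge 0$, the function $S\mapsto x(E(S))$ is supermodular (an edge whose two endpoints are split between $S\setminus T$ and $T\setminus S$ is counted in $E(S\cup T)$ but in neither $E(S)$ nor $E(T)$, and every other edge contributes equally on both sides), and adding a constant preserves this, so $g$ is supermodular.

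Maximizing $\sum_i g(P_i)$ over partitions of a ground set for supermodular $g$ is exactly the (negated) Dilworth truncation of the submodular function $-g$ at $V$, and this — together with an optimal partition — is computed in polynomial time by the standard incremental algorithm: add the vertices $v_1,\dots,v_n$ one at a time, maintain an optimal partition $\cP_j$ of $\{v_1,\dots,v_j\}$ for the truncated objective on the prefix, and at step $j{+}1$ decide which blocks of $\cP_j$ are merged with $\{v_{j+1}\}$ via one submodular minimization over subsets of the current blocks; because $g$ is defined from a graph each such step is a min-cut/max-flow computation, so the whole procedure is polynomial. Equivalently one may argue through the matroid picture: $\Pstdom(G)$ is precisely the set of nonnegative vectors dominating a base of the graphic matroid, so $x\notin\Pstdom(G)$ iff $x(S)<\rnk(E)-\rnk(E\setminus S)$ for some $S\subseteq E$; the right-hand side is submodular and the left-hand side modular in $S$, so a minimizing $S$ is found by one submodular minimization over $2^E$, and the partition $\cP$ of $V$ into the connected components of $(V,E\setminus S)$ then attains the optimum, since $\deltaP(\cP)\subseteq S$ and $|\cP|-1 = \rnk(E)-\rnk(E\setminus S)$.

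The main obstacle is the correctness of the incremental Dilworth-truncation step: the uncrossing/exchange argument guaranteeing that some optimum for the prefix $\{v_1,\dots,v_{j+1}\}$ is obtained from an optimum $\cP_j$ of $\{v_1,\dots,v_j\}$ by merging a sub-collection of its blocks with $\{v_{j+1}\}$, which is what makes the local decision globally optimal. This is classical (it underlies Cunningham's network-strength/reinforcement algorithms and Barahona's treatment), so I would cite it rather than reprove it; in the matroid route the analogous nonroutine point is the passage from an arbitrary minimizing edge set $S$ to the partition given by its complement's components, which is the short argument just sketched. Everything else is a direct invocation of polynomial-time submodular minimization, so no further heavy machinery is required.
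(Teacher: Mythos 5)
The paper invokes this statement purely as a citation to Barahona~\cite{Bar92} and supplies no proof of its own, so there is nothing in-paper to compare against; your job here was to reconstruct a proof, and you have done so correctly. The algebraic rewriting $x(\deltaP(\cP)) - (|\cP|-1) = (x(E)+1) - \sum_i \bigl(x(E(P_i))+1\bigr)$ is exact; the edge-by-edge case analysis does establish that $S\mapsto x(E(S))$ is supermodular when $x\ge 0$; and the resulting problem is precisely the Dilworth truncation of the submodular function $-g$, solved by the classical incremental algorithm whose per-vertex subproblems reduce to min-cut/max-flow computations in this graph-defined instance. This is in fact close in structure to Barahona's own algorithm, which separates from $\Pstdom(G)$ via $O(|V|)$ max-flow calls, so your first route is essentially the cited one. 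Your alternative matroid route --- minimize the submodular $S\mapsto x(S)-\bigl(\rnk(E)-\rnk(E\setminus S)\bigr)$ over $S\subseteq E$, then take $\cP$ to be the connected components of $(V,E\setminus S)$ --- is also correct and conceptually cleaner, at the cost of invoking general submodular function minimization as a black box rather than a graph-specific flow routine. One small thing worth making explicit in that route is the converse inequality: for \emph{any} partition $\cP$, putting $S=\deltaP(\cP)$ yields $\rnk(E)-\rnk(E\setminus S)\ge|\cP|-1$, so $\min_S$ is at most $\min_\cP$; combined with the direction you sketched ($\deltaP(\cP)\subseteq S$ and $|\cP|-1=\rnk(E)-\rnk(E\setminus S)$ for the component partition of a minimizing $S$, using that $G$ is connected), the two minima coincide and your output partition is optimal. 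Citing the correctness of the incremental Dilworth-truncation step as classical is entirely appropriate.
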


\medskip

Suppose $\matroid = (E, \cI)$ is a matroid with groundset $E$ and independent sets $\cI$. 
The \emph{matroid base polytope} of $\matroid$, which we will denote $\baseP{\matroid}$, is the convex hull of the incidence vectors of all bases of $\matroid$. The \emph{rank} of $\matroid$, denoted $\rnk_{\matroid}$, is the cardinality of the largest independent set of $\matroid$. 
%Given a subset $F$ of $E$, the \emph{restriction} of $\matroid$ to $F$ is the matroid on groundset $F$ with independent sets $\{ I \cap E' : I \in \cI\}$.
%We will denote this matroid by $\matroid_{\mid F}$.
Given $F \subseteq E$:
\begin{enumerate}
\item The \emph{deletion} of $F$ from $\matroid$ is the matroid on the groundset $\matroid \smallsetminus F$ with independent sets $\{I \smallsetminus F : I \in \cI\}$. If $F=\{e\}$, i.e. it is a singleton, we will use the shorthand $M - e$. 
\item The \emph{restriction} of $\matroid$ to $F$,  denoted $\matroid_{\mid F}$, is the matroid on groundset $F$ with independent sets $\{ I \cap F : I \in \cI\}$. This is equivalent to the deletion of $E \setminus F$.
\item The \emph{contraction} of $\matroid$ by $F$, denoted $\matroid / F$, is the matroid on the groundset $E \setminus F$ with independent sets $\{ I \subseteq E \setminus F : I \cup B \in \cI\}$, where $B$ is an arbitrary basis of $\matroid_{\mid F}$ (equivalently, an independent set of $\matroid$ contained in $F$ of largest cardinality). If $F=\{e\}$, i.e. it is a singleton, we will use the shorthand $\matroid/e$. 
\end{enumerate}

%\todo{Need to actually define deletion as well. So perhaps due deletion, contraction of single elements, and then restriction and contraction of a set.}

%\todo{Define $\rnk_{\matroid}$.}

%\nnote{There's no factor 2 in this definition.} \Nathan{I'm not quite sure what $\delta(P)$ is here (or $P$).}
%The above characterization easily implies the following well-known fact:
%\begin{fact}\label{fact:pstdom}
%	Suppose for a graph $G=(V,E)$ and a vector $x \in \Rplus^{E}$, $x(\delta(S)) \ge 2$ for all $S \subseteq V$. Then $x \in \Pstdom(G)$.
%\end{fact}
%There are efficient algorithms to separate $\Pstdom(G)$; in particular, given a point $x \in \Rplus^E$, a partition $\cP$ maximizing the violation $|\cP| - 1 - x(\deltaP(\cP))$ can be found via at most $|E|$ maximum flow computation~\cite{Barahona}.

%\Nathan{Add something about basic feasible solutions here}

\subsection{Our Results}\label{sec:results}
We recall that a family of sets $\cL \subseteq 2^V$ is \textbf{laminar} if for all $S,T \in \cL$, $S \cap T$ is either equal to $\emptyset,S$, or $T$. 

%\nnote{Removed laminar definition since it's in the intro, and quite close. But can decide later if we want to include it again.}

%\nnote{I'm not sure I'm a big fan of this $\delta(S)_{T^*}$ notation, I'd prefer $|T \cap \delta(S)|$.}

\begin{definition}[Laminar constrained spanning tree problem]\label{def:problem}
	Let $G=(V,E)$ be a connected graph, and $\Ldeg$ a laminar family on $V$, with an associated degree bound $b_S \in \mathbb{Z}_{\geq 0}$ for each $S \in \Ldeg$. The goal is to find efficiently a spanning tree $T$ for which 
        %$\hyperlink{T}{\delt{S}{T}} \leq \alpha b_S$
        $\delt{S}{T} \leq \alpha b_S$
        for each $S \in \Ldeg$, assuming that there does exist a spanning tree $T^*$ satisfying $\delt{S}{T^*} \leq b_S$. In such a case, we say $T$ is an $\alpha$-approximate solution to the laminar constraints.
We assume for convenience that $V \in \Ldeg$, though the associated constraint is of course vacuous.
\end{definition}
To solve the above problem, we first determine if LP \eqref{eq:lp} is feasible, which can be done in polynomial time. If it is not, we may return ``no'' to the above problem since this would certify that such a tree does not exist. Thus to obtain an $\alpha$ approximation for the problem above, it is enough to obtain an $\alpha$-thin tree with respect to a solution $x$ of \eqref{eq:lp}.
\hypertarget{problemstatement}{\begin{definition}[Laminar $\alpha$-thin tree for $(G,\Ldeg,x)$] %A stronger version of the above problem asks for $T$ under the condition that there is a point $x$ in $\Pst(G)$, the spanning tree polytope of $G$, for which $x(\delta(S)) \leq b_S$ for all $S \in \Ldeg$. 
As input we get a graph $G=(V,E)$, a laminar family $\Ldeg$ over $V$, and a feasible LP solution $x$ to LP \eqref{eq:lp}. 
%our goal is now to find a spanning tree $T$ such that $\delt{S}{T} \le \alpha b_S$ for all $S \in \Ldeg$.
Our goal is to find a spanning tree $T$ such that $\delt{S}{T} \le \alpha x(\delta(S))$ for all $S \in \Ldeg$, i.e. a tree that is $\alpha$ thin with respect to $x$.
\end{definition}}

The main result of this paper is the following. We remark it also gives an $O(1)$ approximation in terms of the cost of the tree.
\begin{theorem}\label{thm:main}
Given an instance \hyperlink{problemstatement}{$(G, \Ldeg, x)$}, we can in polynomial time find a spanning tree $T$ such that:
\begin{enumerate}[i)]
\item $c(T) \le (2+\sqrt{7})c(x) < 5c(x)$, and
\item $\delt{S}{T} \le (2+\sqrt{7})^2x(\delta(S)) < 22x(\delta(S))$, i.e., it is a 22-thin tree for $(G,\Ldeg,x)$. %\nnote{Would it be cleaner to not have ceilings in this statement?}
\end{enumerate}
\end{theorem}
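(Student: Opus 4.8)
The plan is to execute the two-phase strategy sketched in the introduction: first reduce an arbitrary instance $(G,\Ldeg,x)$ to an $\Ldeg$-aligned instance (where $x$ restricted to each $S\in\Ldeg$ lies in the base polytope of the graphic matroid of $G[S]$), paying a constant factor in both cost and crossing bounds; then apply an iterative-relaxation argument to the aligned instance. For the reduction, I would work top-down through the laminar tree. For a set $S$ with children $S_1,\dots,S_m$ in $\Ldeg$, the restriction $x_{\mid E(S)}$ need not span $G[S]$; the idea is to ``round up'' $x$ inside each minimal laminar set to a point in the corresponding spanning tree polytope, using the dominant characterization from \Cref{thm:domsep} and \eqref{eq:spanningtreelp} to control how much extra weight is needed, and to contract each child $S_i$ (since a tree restricted to $S_i$ only interacts with the rest through $\delta(S_i)$). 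One must argue that this rounding-up only increases $x(\delta(S))$ by a controlled constant factor — this is where properties of spanning trees, as opposed to general matroids, are essential, since the partition inequalities defining $\Pstdom$ give a handle on exactly how much fractional mass must be added to make $x$ spanning inside $S$ without blowing up the boundary.

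For the second phase, given an $\Ldeg$-aligned instance I would set up the polytope that is the intersection of $\Pbase$ (for the graphic matroid, or in fact any matroid, which is the point of this part generalizing) with the laminar constraints $x(\delta(S))\le b_S$, and run iterative relaxation in the style of Singh–Lau and Goemans. The loop: take a vertex optimal solution $x$; if some coordinate is $0$ or $1$, delete or contract it; otherwise, use a counting/uncrossing argument on the tight constraints. The tight matroid-rank constraints form a laminar (or chain-like, after uncrossing) family, and the tight laminar-cut constraints $\delta(S)$ form a laminar family by hypothesis; a rank/dimension count must then show that some tight cut constraint $S$ has only a constant number of edges of $\delta(S)$ with fractional $x$-value, at which point that constraint can be dropped while charging the future violation to its current slack. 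Iterating yields an integral point, i.e.\ a spanning tree, violating each dropped constraint $b_S$ by at most a constant additive-then-multiplicative amount; since we compare against the LP, the multiplicative violation is $O(1)$. Tracking constants carefully through both phases — the rounding-up factor in phase one and the iterative-relaxation violation in phase two — and optimizing their product should yield the claimed $(2+\sqrt7)$ for cost and its square for thinness.

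The main obstacle I anticipate is the reduction to $\Ldeg$-aligned instances and, specifically, bounding the cost and boundary blow-up incurred when making $x$ spanning inside every laminar set simultaneously. The subtlety is that the laminar sets are nested, so ``fixing up'' a small set changes the picture for its parent, and one needs an argument that composes well over the whole laminar tree rather than just a single cut. I expect this requires a clever order of operations (likely contracting children before fixing a parent) together with a potential-function or amortized accounting of the added fractional weight against the quantity $|S|-1-x(E(S))$, i.e.\ the ``deficiency'' of $x$ inside $S$. The iterative-relaxation phase, by contrast, should be comparatively standard once the alignment hypothesis is in place, since alignment is precisely what makes the cut constraints behave like matroid constraints and keeps the tight-constraint system laminar.
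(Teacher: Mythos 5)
Your high-level two-phase plan (reduce to an $\Ldeg$-aligned instance, then iterative relaxation) matches the paper, and your instinct that the reduction is the crux is correct. But your proposed mechanism for the reduction has a genuine gap. You want to ``round up'' $x$ inside each laminar set $S$ so that $x_{\mid E(S)}$ lands in $\Pst(G[S])$, paying only a constant factor and charging the added weight to the deficiency $|S|-1 - x(E(S))$. This cannot work as stated: if $G[S]$ is only weakly connected in the support of $x$ (or disconnected), no constant rescaling of $x_{\mid E(S)}$ lies in $\Pstdom(G[S])$, and the weight you would need to add sits on edges with $x_e$ tiny or zero, so it cannot be charged to $x$ at bounded multiplicative cost. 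The paper's key move, which is absent from your proposal, is to change the \emph{cut family} rather than the point. It scales to $\eta x$ once and for all, then processes $\Ldeg$ bottom-up (minimal sets first, contracting already-processed descendants): whenever $\eta x_{\mid E(S)} \notin \Pstdom(G[S])$, it finds the partition $\cP$ of $S$ minimizing $\eta x(\delta(\cP)) - (|\cP|-1)$ and \emph{replaces} $S$ in $\Ldeg$ by the parts of $\cP$. Each part is shown to be $\eta$-well-connected (\Cref{lem:partitionwellconnected}), so after this pass one has a new laminar family $\Ldeg'$ for which $\eta x$ dominates an $\Ldeg'$-aligned point. The cost of this substitution is controlled by \Cref{lem:smallsum}: $\sum_i x(\delta(P_i)) \le \tfrac{1}{1-2/\eta} x(\delta(S))$, so a tree thin for the parts is thin for $S$ with only a constant blow-up. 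Without this replace-the-set idea you have no way to handle sets $S$ that are too poorly connected to ever become spanning.

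On the iterative relaxation phase, your description is close but omits the step that actually yields a multiplicative (rather than additive) guarantee. Besides dropping a tight cut $S$ with small residual fractionality $\sum_{e\in\delta(S)}(1-x_e) < 3$, the paper also drops $S$ when it is \emph{approximately implied} by another tight set $S'$ with $\delta(S') \subseteq \delta(S)$ and $\sum_{e\in\delta(S)\setminus\delta(S')}(1-x_e) < 2$. The token-counting argument that justifies that such a drop is always available splits on the number of maximal $\LdegB$-children of a tight set (three or more, exactly two, exactly one) and uses $\Ldeg$-alignment precisely to make certain combinations like $x(E_S)$ integral. Your ``rank/dimension count'' gesture is in the right spirit, but without the approximate-implication drop the induction does not close, since a single-child tight set can have arbitrarily many fractional boundary edges.
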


Our theorem can be generalized as follows, which can be used to reduce the cost of the tree arbitrarily close to 2 (at the expense of incurring a larger multiplicative loss).
\begin{theorem}[Main]\label{thm:main2}
Given an instance \hyperlink{problemstatement}{$(G, \Ldeg, x)$} and any $\eta > 2$, we can in polynomial time find a spanning tree $T$ such that:
\begin{enumerate}[i)]
\item $c(T) \le \eta c(x)$, and
\item $\delt{S}{T} \le \frac{1}{1-\frac{2}{\eta}}(2\eta + 3)x(\delta(S))$.
\end{enumerate}
\end{theorem}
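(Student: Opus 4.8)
The plan is to follow the two-stage strategy outlined in the introduction. In \textbf{Stage 1} we reduce, in a graphic-matroid-specific way, to the special case of $\Ldeg$-\emph{aligned} instances — those in which $x_{\mid E(S)}$ lies in the base polytope of the graphic matroid of $G[S]$ for every $S \in \Ldeg$ — paying for this reduction out of the budget governed by $\eta$. In \textbf{Stage 2} we solve aligned instances by iterative relaxation applied to \eqref{eq:lp}; this part uses nothing special about the graphic matroid.

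For \textbf{Stage 1}, given $(G,\Ldeg,x)$ and $\eta>2$, I would process $\Ldeg$ from its innermost sets outward. Once every child (in $\Ldeg$) of a set $S$ has been made tight, contract each such child blob (the point restricted to the contracted graph stays in its spanning tree polytope, since each blob is now internally spanning), and then deal with the slack $|S|-1-x(E(S))$ of $S$ entirely inside the contracted blob $G[S]/\{\text{children}\}$ by pushing the required additional mass along cheap trees there. Contracting already-aligned blobs is what keeps filling one set from destroying the tightness of its descendants, so the bottom-up order is essential. Two quantities must be tracked: the cost, which I would arrange to satisfy $c(x')\le \eta\, c(x)$ by mixing in the filling mass with a weight tuned to $\eta$ (here $\eta>2$ is exactly the slack needed to carry out the filling while staying in $\Pst(G)$, reflecting that the all-$\tfrac{2}{k}$ vector lies in $\Pstdom(G)$ for $k$-edge-connected $G$); and the cut values, where I would show each $x'(\delta(S))$ moves relative to $x(\delta(S))$ by at most the corresponding $\eta$-controlled factor. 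If some $G[S]$ is disconnected, one first replaces $S$ by the vertex sets of its connected components, which only refines $\Ldeg$ and does not weaken the guarantee.

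For \textbf{Stage 2}, run the standard iterative relaxation loop on \eqref{eq:lp} with $b_S := x'(\delta(S))$: take a vertex solution, delete value-$0$ edges, contract value-$1$ edges, and drop a laminar cut constraint once the number of support edges it controls — after discounting the tight spanning-tree constraints nested inside it — falls below a fixed constant. A token/rank-counting argument (using that the tight spanning-tree constraints and $\Ldeg$ are both laminar, and that in the aligned case every set already saturates its internal spanning-tree budget, so its only live contribution is its cut constraint) shows that in each iteration at least one such move is available, so the loop ends at an integral point, i.e.\ a spanning tree $T$. Since iterative relaxation never increases the objective, $c(T)\le c(x')$; and since a constraint is dropped only when few support edges cross it and each drop costs a bounded additive amount per nesting level, $T$ obeys a small violation $\delt{S}{T}\le O(1)\cdot x'(\delta(S))+O(1)$. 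Finally use the elementary fact that any point of $\Pst(G)$ has $x(\delta(S))\ge 1$ for every nonempty proper $S$ (and likewise for $x'$, sharpened via $x(E)=|V|-1$ and $x'(E(S))=|S|-1$) to absorb the additive term, and compose with the Stage-1 bounds $c(x')\le\eta c(x)$ and the $\eta$-controlled relation between $x'(\delta(S))$ and $x(\delta(S))$. Optimizing the constants yields $c(T)\le \eta c(x)$ and $\delt{S}{T}\le \tfrac{1}{1-2/\eta}(2\eta+3)\,x(\delta(S))$; the choice $\eta=2+\sqrt7$, for which $\tfrac{\eta(2\eta+3)}{\eta-2}=\eta^2$, recovers \Cref{thm:main}.

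The \textbf{main obstacle} is Stage 1. Iterative relaxation alone cannot work directly, since the matroidal generalization has unbounded integrality gap even for chains, so any approach must exploit spanning trees specifically — and it does so precisely in the reduction to aligned instances. The delicate point is to fill every laminar slack in a way that is simultaneously cost-bounded, keeps the point in $\Pst(G)$, and does not degrade the cut values of nested sets; this is where the threshold $\eta>2$ and the bottom-up-with-contraction scheme come from. Stage 2, by contrast, should be a careful but essentially routine adaptation of known iterative-relaxation counting arguments, since in the aligned case the two laminar families of tight constraints interact without the pathologies present in the general matroidal setting.
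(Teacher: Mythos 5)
Your two-stage outline matches the paper's architecture, and your Stage~2 sketch (iterative relaxation with edge deletion/contraction and constraint dropping, then absorbing the additive slack via $x(\delta(S))\ge 1$) is the right idea, though the mechanism you describe for the drop rule — ``each drop costs a bounded additive amount per nesting level'' — would give an additive violation scaling with the depth of $\Ldeg$. The paper avoids this with a specific second drop condition: $S$ may also be dropped when some tight $S'$ with $\delta(S')\subseteq\delta(S)$ satisfies $\sum_{e\in\delta(S)\setminus\delta(S')}(1-x_e)<2$. This ``approximately implied'' rule is what caps the violation at $2x'(\delta(S))+3$ regardless of nesting depth, and it is also what makes the token-counting case analysis (Case~3, one child) go through.

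The real gap is in Stage~1. You propose to leave $\Ldeg$ alone and \emph{fill} $x$ to an $\Ldeg$-aligned point $x'$, bottom-up with contraction, paying an $\eta$ factor. But no such $x'$ need exist within any bounded multiple of $x$: if $S\in\Ldeg$ admits a partition $\cP$ of $G[S]$ (after contracting its aligned children) with $\eta\,x(\delta_{G[S]}(\cP)) < |\cP|-1$ — e.g., two halves joined only by edges of tiny total $x$-mass — then $\eta x_{\mid E(S)}\notin \Pstdom(G[S])$, so \emph{no} $x'\le \eta x$ can have $x'_{\mid E(S)}\in \Pst(G[S])$. Your disconnected-$G[S]$ workaround is the degenerate case of this, but the same obstruction arises with nonzero crossing mass and is not removable by any choice of ``cheap trees to push mass along.'' The paper's Stage~1 therefore does something different: it keeps $\eta x$ and \emph{replaces} the laminar family. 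Algorithm~\ref{alg:reduction} processes $\Ldeg$ bottom-up and, for each $S$, finds a most-violated partition $\cP$ of the contracted $G_S$, swaps $S$ for the parts of $\cP$ in the output family $\Ldeg'$; each part is then $\eta$-well-connected (\cref{lem:partitionwellconnected}), so $\eta x$ does dominate an $\Ldeg'$-aligned point. The price is that one must transfer the thinness guarantee from $\Ldeg'$ back to $\Ldeg$, and that is the content of \cref{lem:smallsum}: if $S$ is replaced by $S_1,\dots,S_\ell$, then $\sum_i x(\delta(S_i)) \le \tfrac{1}{1-2/\eta}\,x(\delta(S))$, which is exactly where the $\tfrac{1}{1-2/\eta}$ factor in the theorem comes from. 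Without this family-replacement step and its accompanying lemma, the reduction does not go through; the claim ``$\eta>2$ is exactly the slack needed to carry out the filling'' has no justification for individual sets $S$, since $x\in\Pst(G)$ gives no connectivity lower bound inside $G[S]$.
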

We will prove the latter theorem, since the previous theorem follows by setting $\eta = 2+\sqrt{7}$. 

\subsection{Proof Overview}\label{sec:overview}

A key observation of this paper is the usefulness of the following definition.
\hypertarget{tight}{\begin{definition}[$\Ldeg$-aligned]\label{def:Laligned}
Given a graph $G=(V,E)$ and a laminar family $\Ldeg \subseteq 2^V$ we say a point $x \in \Pst(G)$ is \emphdef{$\Ldeg$-aligned} if $x_{\mid E(S)} \in \Pst(G[S])$ for all $S \in \Ldeg$.
\end{definition}}
Note that $G[S]$ should be connected for each set $S \in \Ldeg$, otherwise no point can be $\Ldeg$-aligned.

In \cref{sec:reduction} and \cref{sec:rounding} we show the following two theorems which when combined immediately give \cref{thm:main2}. 
\begin{restatable}[Laminar thin trees for $\Ldeg$-aligned points]{theorem}{rounding}\label{thm:rounding}
Given an instance \hyperlink{problemstatement}{$(G, \Ldeg, x)$} for which $x$ is \hyperlink{tight}{$\Ldeg$-aligned}, we can find a tree $T$ of cost at most $c(x)$ in polynomial time for which $$\delt{S}{T} \le 2\lceil x(\delta(S))\rceil + 1 \le 2x(\delta(S))+3$$ for all $S \in \Ldeg$. 
\end{restatable}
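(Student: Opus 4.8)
The plan is to use iterative relaxation on the LP whose feasible region is $\Pst(G)$ together with the constraints $x(\delta(S)) \le b_S$ for $S \in \Ldeg$, where we initialize the bounds $b_S := \lceil x(\delta(S))\rceil$. Since $x$ is $\Ldeg$-aligned, it is a feasible point of this LP with objective value $c(x)$; this is the crucial structural input that we exploit. We maintain throughout an iterate $y$ that is a vertex (basic optimal solution) of the current relaxation, restricted to the current graph $G$, with a running collection of ``active'' laminar constraints. As usual in iterative relaxation, we repeatedly do one of: (1) if $y_e = 1$ for some edge $e$, contract $e$ (this preserves the laminar structure, updating each $S \in \Ldeg$ to $S/e$ and decrementing $b_S$ by $1$ if $e \in \delta(S)$); (2) if $y_e = 0$, delete $e$; (3) if some active laminar constraint $S$ has small ``degree'' — specifically if the number of edges currently in $\delta(S)$ is at most some constant (we will target $2b_S+1$, or rather the corresponding fractional budget) — drop that constraint from the active set and never look at it again, charging its final violation to this bound. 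We terminate when the edge set forms a spanning tree of the (contracted) graph.

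The heart of the argument is the counting step showing that one of (1)–(3) always applies. Suppose none of the integral moves apply, so $0 < y_e < 1$ for every remaining edge. Let $\mathcal{C}$ be a maximal laminarly-independent subfamily of tight spanning-tree (rank) constraints and active tight $\delta$-constraints that, together, span the row space of $y$; then $|E| = |\mathcal{C}|$ by basic-solution rank counting. The goal is to derive a contradiction from the assumption that every active laminar constraint $S$ still has $|\delta(S)| \ge 2b_S + 2$ (so that rule (3) does not fire). I would run a token/fractional-token argument over the laminar forest: assign to each edge a total of, say, $2$ tokens, distributed to the two endpoints or to the nearest enclosing tight set in $\mathcal{C}$, and show the tight rank constraints each collect enough tokens to ``pay for themselves'' (the classical degree-bounded MST counting: a tight set $E(S')$ in $\mathcal{C}$ with children gets a strictly positive surplus), while each active tight $\delta$-constraint $S \in \mathcal{C}$ would, if $|\delta(S)| \ge 2b_S+2$ and $y(\delta(S)) = b_S$ with $0<y_e<1$, also accumulate a strictly positive surplus from its boundary edges (each contributing fractionally less than $1$ while there are at least $2b_S+2 > 2y(\delta(S))$ of them). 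Summing surpluses over $\mathcal{C}$ gives strictly more tokens than $2|E|$, a contradiction. The precise constant ``$2b_S+1$'' in the theorem is exactly what makes the $\delta$-constraint surplus strictly positive: with $2b_S+1$ or fewer boundary edges we can afford to drop the constraint, and with $2b_S+2$ or more we get the needed slack in the counting.

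Two points deserve care, and I expect the boundary-edge counting to be the main obstacle. First, $\Ldeg$-alignedness must be \emph{maintained} (or at least, the property that lets the counting go through must be maintained) under contraction and deletion; contracting a $1$-edge is fine, but deleting a $0$-edge could in principle disconnect some $G[S]$. One shows this cannot happen while the constraint is active and tight: if $\delta(S)$ in the induced subgraph were a cut of $G[S]$, the spanning-tree rank constraints on the two sides would contradict tightness of the $E(S)$ constraint, or more simply, we only ever delete edges with $y_e = 0$ and alignedness guarantees enough fractional mass inside each $S$. Second, the interaction between the $\delta$-constraints and the nesting of the laminar family in the token argument — an edge in $\delta(S)$ for an inner set $S$ may also lie in $\delta(S')$ for an outer set — must be handled by charging each edge's tokens to the \emph{innermost} tight set containing it and arguing the surpluses still telescope up the forest; this is where the laminar (as opposed to arbitrary) structure of $\Ldeg$ is essential. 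The cost bound $c(T)\le c(x)$ is immediate and standard: iterative relaxation on a minimization LP, starting from optimum value $\le c(x)$, only ever contracts/deletes edges and drops constraints, so the optimum of each successive relaxation is non-increasing, and the final spanning tree is a vertex of the final polytope hence has cost at most $c(x)$.
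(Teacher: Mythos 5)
Your high-level plan — solve the LP, iteratively remove $0$/$1$-edges, drop nearly-satisfied laminar constraints, and derive a contradiction by a token-counting argument over the tight constraints — matches the paper's strategy. But there are substantive gaps in the drop rule and the counting, which are exactly where the paper has to work hardest.

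\textbf{The drop rule is incomplete.} You propose dropping a tight constraint $S$ only when its own boundary is small (roughly $\sum_{e\in\delta(S)}(1-x_e)$ below a constant). The paper needs a \emph{second} drop condition: drop $S$ if there is another tight $S' \in \Ltight$ with $\delta(S') \subseteq \delta(S)$ and $\sum_{e\in\delta(S)\setminus\delta(S')}(1-x_e) < 2$. Without this, the counting genuinely fails when $S$ has exactly one maximal tight child $S'$ in the basis and $\delta(S') \subseteq \delta(S)$: then $S$ receives only the $\tfrac12$ excess from $S'$ plus tokens from $\delta(S)\setminus\delta(S')$, and the latter can be arbitrarily small even when $|\delta(S)|$ is large. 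Your ``charge each edge to the innermost tight set and argue surpluses telescope'' cannot resolve this, because the outer set $S$ simply has too few edges of its own.

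\textbf{Where $\Ldeg$-alignedness enters.} You flag it as ``the crucial structural input'' but don't say where it is used. In the paper it enters precisely in the counting: for a tight set $S$ with children $S_1,\dots,S_k$ in the laminar basis, $\Ldeg$-alignedness forces $x(E_S)$ (the mass on edges inside $S$ but between contracted children) to be an \emph{integer}. This, combined with the tight cut constraints on $S$ and its children, yields half-integrality of certain sums and hence a surplus of at least $\tfrac12$ in the two-children case and a full extra token in the one-child case. Your ``each boundary edge contributes fractionally less than $1$'' observation is not enough; the argument is quantitative and hinges on these integrality facts. Related: the paper chooses the laminar chain of tight rank constraints $\Cbasis$ to be \emph{maximal}, which it uses to rule out a degenerate linear dependence in the two-children case; your sketch omits this.

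\textbf{Maintaining alignedness.} You note that alignedness must be preserved under deletion/contraction and gesture at a connectivity argument. The paper sidesteps this cleanly by refining the graphic matroid to a matroid that is itself $\Ldeg$-aligned (direct sums of the induced graphic matroids over the laminar tree). Then \emph{every} basic LP solution in every recursive call is automatically $\Ldeg$-aligned, with no case analysis about disconnecting $G[S]$. Your version could perhaps be made to work, but it would require a separate lemma you have not supplied, and the refinement route is both simpler and the key to the paper's generalization to arbitrary matroids.

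In short: the iterative relaxation skeleton and the cost bound are right, but the heart of the proof is the bottom-up token argument with the two drop rules and the integrality of $x(E_S)$ from $\Ldeg$-alignedness, and those ingredients are missing or only gestured at in your proposal.
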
 

\begin{restatable}[Reduction to $\Ldeg$-aligned points]{theorem}{reduction}\label{thm:reduction}
    For any instance \hyperlink{problemstatement}{$(G, \Ldeg, x)$} and any $\eta > 2$, we can find an instance $(G, \Ldeg', x')$ in polynomial time such that:
    \begin{enumerate}[i)]
    \item $x'$ is \hyperlink{tight}{$\Ldeg'$-aligned},
    \item $x'$ is dominated by $\eta x$,
    \item If for a spanning tree $T$ there are $\alpha,\beta  \ge 0$ such that $\delt{S}{T} \le \alpha x'(\delta(S)) + \beta$ for all $S \in \Ldeg'$, then we have $\delt{S}{T} \le \frac{1}{1-\frac{2}{\eta}}(\eta \alpha + \beta)x(\delta(S))$ for all $S \in \Ldeg$. 
    \end{enumerate}
\end{restatable}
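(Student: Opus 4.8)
The plan is to iteratively repair the laminar family, one set at a time, from the inside out. The obstruction to $x$ being $\Ldeg$-aligned is that for some $S \in \Ldeg$ we may have $x_{\mid E(S)} \notin \Pst(G[S])$; this can fail in two ways: either $G[S]$ is disconnected, or $x(E(S)) < |S| - 1$ (the equality constraint of the spanning tree polytope is not met, since the inequalities $x(E(S')) \le |S'|-1$ for $S' \subseteq S$ hold automatically as $x \in \Pst(G)$). I would process the sets of $\Ldeg$ in an order refining the partial order by inclusion (smallest first), maintaining the invariant that once a set has been processed, the current fractional point restricted to it lies in the spanning tree polytope of the current induced graph. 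When we come to handle $S$, all maximal proper subsets $S_1, \dots, S_m \in \Ldeg$ of $S$ have already been made aligned.

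The key move is to increase $x$ on edges inside $S$ so that $x(E(S)) = |S|-1$, while not increasing $x$ on any edge of $\delta(S)$, and to do this in a way that keeps the point in the dominant of the spanning tree polytope and keeps the already-aligned subsets aligned. Concretely, contract each already-aligned child $S_i$ to a single vertex; inside the resulting graph on $S$ we have a point which dominates a point of the spanning tree polytope (because $x \in \Pst(G)$ implies $x_{\mid E(S)} \in \Pstdom(G[S])$ after contractions — here one uses that contracting a connected piece on which $x$ already restricts into $\Pst$ preserves membership in the dominant), so we may add fractional weight totaling $(|S|-1) - x(E(S))$ to bring it up to an actual point of $\Pst$ of the contracted graph $G[S]$; expanding the children back, $x_{\mid E(S)}$ now lies in $\Pst(G[S])$. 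The cost increase is controlled because the point we add weight to dominates a point of $\Pst(G[S])$ of cost at most $c(x_{\mid E(S)})$; iterating over the laminar family, each edge lies in at most (depth) many sets, but a careful accounting — spending at most a $\frac{2}{\eta-2}$-fraction of extra weight relative to $x$ at the top and geometrically less deeper down, or equivalently scaling the target so the total added weight on each edge is at most $(\eta-1)$ times its original $x$-value — yields $x' \preceq \eta x$. The thinness translation in part~(iii) then follows from the bound $x'(\delta(S)) \ge x(\delta(S)) - (\text{weight added inside ancestors of } S \text{ that lands on } \delta(S))$; since we never add weight to $\delta(S)$ when processing $S$ itself but may when processing ancestors, one shows $x'(\delta(S)) \ge (1 - \tfrac{2}{\eta}) x(\delta(S))$, and also $\delta_{\Ldeg'}$-cuts refine $\delta_{\Ldeg}$-cuts appropriately so that a bound $\alpha x'(\delta(S)) + \beta$ on $\Ldeg'$ pushes down to $\Ldeg$.

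To make the disconnectedness issue and the "$V \in \Ldeg$" boundary case clean, I would augment $\Ldeg'$ with the connected components of each $G[S]$ as needed (these become new sets with vacuous or inherited bounds), so that in the final instance every set induces a connected graph and every set is exactly spanned — giving $\Ldeg'$-alignment. One must check $\Ldeg'$ is still laminar (components of induced subgraphs of laminar sets are laminar with everything) and that adding these sets only helps, never hurts, the final thinness guarantee, since $\delta(S) = \bigsqcup \delta(\text{components})$ on the relevant edges.

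\textbf{Main obstacle.} The delicate part is the simultaneous bookkeeping: we must add enough weight to achieve $x(E(S)) = |S|-1$ for \emph{every} $S \in \Ldeg$ while (a) never touching $\delta(S)$ during $S$'s own processing, (b) keeping total added weight on each edge bounded by $(\eta-1)x_e$ so that $x' \preceq \eta x$, and (c) ensuring the contracted-children argument genuinely produces a point of $\Pst(G[S])$ and not merely its dominant. The tension is that deeper sets get "charged" by all their ancestors; the factor $\frac{1}{1-2/\eta}$ and the restriction $\eta > 2$ are exactly what make the geometric series of accumulated weight converge, so getting the constants to line up with the statement is where the real work lies. I expect the cleanest route is to not literally iterate but to directly define $x'$ via a single optimization — e.g. the minimum-cost point dominating $x$ and lying in $\Pst(G[S])$ after contracting children, taken consistently — and then prove the domination bound in one shot using the partition/dominant characterization from \eqref{eq:spanningtreelp} and \Cref{thm:domsep}.
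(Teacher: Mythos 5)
The core step in your argument is the assertion that $x \in \Pst(G)$ implies $x_{\mid E(S)} \in \Pstdom(G[S])$ (after contracting already-aligned children), so that weight can simply be added inside $S$ to reach $\Pst(G[S])$. This is false, and it is precisely the failure of this implication that forces the laminar family itself to change. Consider the 4-cycle $v_1v_2v_3v_4$ with $x = (\epsilon,\,1-\epsilon/3,\,1-\epsilon/3,\,1-\epsilon/3)$, which lies in $\Pst(G)$, and take $S = \{v_1,v_2\} \in \Ldeg$. Then $G[S]$ is the single edge $v_1v_2$ with $x$-value $\epsilon$, and $\Pstdom(G[S]) = [1,\infty)$. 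To make $x$ $\Ldeg$-aligned on $S$ you must raise $x_{v_1v_2}$ from $\epsilon$ to $1$, a $1/\epsilon$-fold increase; no fixed $\eta$ controls this. Your "augment $\Ldeg'$ with connected components" fix does not touch the problem, since $G[S]$ here is connected; the issue is not connectivity but that $G[S]$ has a cut carrying too little $x$-mass. The geometric-series bookkeeping you describe also cannot rescue it: the failure occurs at a single set $S$, not through accumulation over the depth of $\Ldeg$.

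The paper instead changes $\Ldeg$. It scales to $\eta x$ and tests, bottom-up, whether each $S$ is $\eta$-well-connected, i.e.\ $\eta x_{\mid E(S)} \in \Pstdom(G[S])$ after contracting children (\cref{def:wellconnected}). When this fails, it \emph{deletes} $S$ from the family and replaces it by the parts of the partition $\cP$ of $S$ minimizing $\eta x(\deltaP_{G[S]}(\cP)) - (|\cP|-1)$; \cref{lem:partitionwellconnected} shows every such part is $\eta$-well-connected, so the replacements never need repairing. (In the example above, $\{v_1,v_2\}$ is replaced by the singletons $\{v_1\}, \{v_2\}$.) Once every set in $\Ldeg'$ is $\eta$-well-connected, one picks for each an actual point $y_S \le \eta x_{\mid E_S}$ of the (contracted) spanning tree polytope and glues them together to get $x' \le \eta x$ that is $\Ldeg'$-aligned --- no weight is ever \emph{added} to $x$; the construction only goes down from $\eta x$. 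Item~(iii) then comes from \cref{lem:smallsum}, a single inequality relating $\sum_i x(\delta(S_i))$ for the replacement parts of $S$ to $\frac{1}{1-2/\eta}\, x(\delta(S))$; this is where $\eta > 2$ and the $\frac{1}{1-2/\eta}$ factor originate, not a depth-indexed geometric series. So the correct frame is "scale up, split offending sets, then subtract," rather than "keep the family and add weight."
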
 
%The more general \cref{thm:main2} follows from \cref{thm:rounding} and a paramaterized version of \cref{thm:reduction} (see \cref{thm:reduction2}).

To obtain our main theorem, given an instance $(G,\Ldeg)$ we solve LP \eqref{eq:lp} to obtain an instance $(G,\Ldeg,x)$. We then apply \cref{thm:reduction} to obtain a $\Ldeg$-aligned instance $(G,\Ldeg',x')$. Finally, we apply \cref{thm:rounding} to obtain our tree with the desired properties.

We remark that while \cref{thm:rounding} can be generalized to hold for any matroid over the edges of a graph and any laminar family (see \cref{sec:rounding}), \cref{thm:reduction} cannot be.
Olver and Zenklusen \cite{OZ13} showed that there is a matroid and a laminar family of constraints (in fact, their family is a chain, and their matroid simply a partition matroid) with no constant-thin basis, in particular giving a lower bound of $O(\frac{\log n}{\log \log n})$ on the multiplicative violation. Thus it is necessary that one of these two pieces cannot be generalized to all matroids.

\cref{thm:reduction} is proved via a natural combinatorial procedure which iteratively replaces sets in $\Ldeg$ that are far from meeting the criteria $x_{\mid E(S)} \in \Pst(G[S])$ with some partition of them. We first consider the scaling $\eta x$, and show that if $\eta \cdot x_{\mid E(S)} \in \Pstdom(G[S])$ for all $S \in \Ldeg$, then there is a point $x'$ dominated by $\eta x$ which is $\Ldeg$-aligned. If not, we iteratively find a minimal cut $S$ for which $\eta \cdot x_{\mid E(S)} \notin \Pstdom(G[S])$, and then find the partition $\cP=\{P_1,\dots,P_k\}$ of $S$ which maximally violates an inequality in $\Pstdom(G[S])$. We then delete $S$ from the laminar family and add $P_1,\dots,P_k$. We show that $\eta \cdot x_{\mid E(P_i)} \in \Pstdom(G[P_i])$ for all $i$. Therefore, by applying this procedure we get closer to obtaining an $\Ldeg$-aligned point. To finish the proof, we show that this process allows us to still effectively maintain (iii) of \cref{thm:reduction}.

%splits up cuts that are "far" from meeting the criteria $x(E(S)) = |S|-1$. In particular, we show that by splitting such cuts $S$ into two pieces $C,S\smallsetminus C$ for which there are few edges between $C$ and $S \smallsetminus S$, we  

\cref{thm:rounding} (and its generalization to arbitrary matroids) is proved via an iterative relaxation procedure. The criteria that $x$ is $\Ldeg$-aligned is, in some sense, exactly what is needed to make the iterative relaxation procedure work.

\section{Reduction to $\Ldeg$-aligned Points}\label{sec:reduction}

\begin{figure}
\centering
\begin{tikzpicture}
\draw[color=darkred, thick] (0,0) node[above left=1.65cm and 1.65cm] {$S$} circle (2.5cm);
\draw (-1.2,0) node[ellipse, draw=darkgreen, text=darkgreen, thick, minimum width = 1.9cm, 
    minimum height = 3.5cm] (C) {$P_1$};
\draw (1.2,0) node[ellipse, draw=darkgreen, text=darkgreen, thick, minimum width = 1.9cm, 
    minimum height = 3.5cm] (C) {$P_2$};
\draw[-, thick] (-0.5,0.3) -- (0.5,0.3) node[above left=0.2cm and 0.05cm] {$< \frac{1}{\eta}$};
\draw[-, thick] (-0.5,0) -- (0.5,0) node {};
\draw[-, thick] (-0.5,-0.3) -- (0.5,-0.3) node {};
\end{tikzpicture}
\caption{An example of a set which is not $\eta$-well-connected (see \cref{def:wellconnected}). In this case, \cref{alg:reduction} may replace $S$ by $P_1$ and $P_2$ in $\Ldeg$.} %If $S$ is a minimal set in $\Ldeg$ which is not $\eta$-well-connected , then $S$ will be deleted from $\Ldeg$ and $C,S\smallsetminus C$ will be added by \cref{alg:reduction}.}
\end{figure}
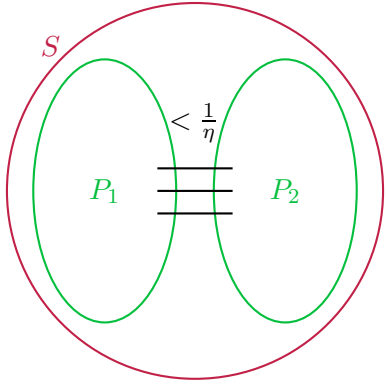

The following definition is key to our reduction to $\Ldeg$-aligned points.

\hypertarget{wellconnected}{\begin{definition}[Well-connected]\label{def:wellconnected}
	Call a set $S \subseteq V$ $\eta$-\emph{well-connected} if $\eta \cdot x_{\mid E(S)} \in \Pstdom(G[S])$, i.e., 
        if $\eta x(\delta_{G[S]}(\cP)) - (|\cP|-1) \geq 0$ for all partitions $\cP$ of $S$.
        %\nnote{Got rid of $\Ldeg$ completely in this definition.}
        %\todo{Was $x_{\mid S}$ defined, or should this be $x_{\mid E(S)}$? Changed for now.}
\end{definition}}

We will make us of the following simple fact, that allows us to contract $\eta$-well-connected subsets of a given set when evaluating the well-connectedness of a given set $S$.

\begin{lemma}\label{lem:contractwellconnected}
Consider a set $S \subseteq V$, and suppose that $S_1, \ldots, S_r$ are disjoint subsets of $S$ that are all $\eta$-well-connected.
Let $G_S=(V_S, E_S)$ be the graph obtained from $G[S]$ after contracting each of $S_1, \ldots, S_r$.
Then $S$ is $\eta$-well-connected, i.e., $\eta \cdot x_{\mid E(S)} \in \Pstdom(G[S])$, if and only if $\eta \cdot x_{\mid E_S} \in \Pstdom(G_S)$.
\end{lemma}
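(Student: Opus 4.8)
The plan is to reduce everything to the partition-inequality description of the dominant of the spanning tree polytope recalled above: $\eta\, x_{\mid E(S)} \in \Pstdom(G[S])$ holds exactly when $\eta\, x(\delta_{G[S]}(\cP)) \ge |\cP|-1$ for every partition $\cP$ of $S$, and likewise $\eta\, x_{\mid E_S} \in \Pstdom(G_S)$ holds exactly when $\eta\, x(\delta_{G_S}(\cQ)) \ge |\cQ|-1$ for every partition $\cQ$ of $V_S$ (the nonnegativity constraints are vacuous since $x \ge 0$). Since contracting the pairwise disjoint sets $S_1,\dots,S_r$ only deletes the edges having both endpoints inside a common $S_i$ and leaves the $x$-values of all surviving edges unchanged, it is convenient to set $f(\cP) := \eta\, x(\delta_{G[S]}(\cP)) - (|\cP|-1)$ and work with $f$ throughout; the goal becomes to show $f \ge 0$ on all partitions of $S$ iff the analogous $G_S$-quantity is $\ge 0$ on all partitions of $V_S$.

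The first step is the easy observation that partitions of $V_S$ correspond bijectively, via uncontraction, to those partitions of $S$ in which no $S_i$ is split across two blocks, and that under this correspondence both the crossing edges (with their $x$-weights) and the number of blocks agree, the only edges killed by contraction being internal to some $S_i$ and hence crossing neither partition. Consequently $f$ takes equal values on corresponding partitions, which immediately yields the ``only if'' direction of the lemma.

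For the ``if'' direction I would take an arbitrary partition $\cP$ of $S$ and massage it into one that splits no $S_i$, without increasing $f$. Concretely, process $S_1,\dots,S_r$ one at a time: with $\cQ$ the current partition, if some $S_i$ meets $t \ge 2$ blocks of $\cQ$, merge exactly those $t$ blocks into one to obtain $\cQ'$. This drops the block count by $t-1$; and every edge of $G[S_i]$ that crossed the partition $\cQ$ induces on $S_i$ now lies inside the merged block, so $x(\delta_{G[S]}(\cQ)) - x(\delta_{G[S]}(\cQ')) \ge x(\delta_{G[S_i]}(\cQ|_{S_i}))$. Here the hypothesis enters: since $S_i$ is $\eta$-well-connected and $\cQ|_{S_i}$ is a partition of $S_i$ into $t$ parts, $\eta\, x(\delta_{G[S_i]}(\cQ|_{S_i})) \ge t-1$, and therefore $f(\cQ') \le f(\cQ)$. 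A merge only coarsens, so an $S_j$ that is already unsplit stays unsplit; after all $r$ steps we reach a partition $\cP^\ast$ splitting no $S_i$ with $f(\cP^\ast) \le f(\cP)$. Since $f(\cP^\ast)$ equals the value of the $G_S$-quantity on the corresponding partition of $V_S$, which is $\ge 0$ by assumption, we conclude $f(\cP) \ge 0$.

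The combinatorial bookkeeping — precisely which edges vanish under contraction, and that merging $t$ blocks changes the count by $t-1$ — is routine. The one genuinely load-bearing point is the per-$S_i$ merge inequality in the ``if'' direction, and the thing to be careful about is the direction of the comparison: we are lower-bounding $f$ on an arbitrary partition by its value on a fully merged (``worst'') partition, not the reverse.
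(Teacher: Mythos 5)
Your proof is correct, but it follows a genuinely different route from the paper's. You argue entirely on the constraint side, via the partition-inequality description of $\Pstdom$: the forward direction is the observation that partitions of $V_S$ correspond to exactly those partitions of $S$ that split no $S_i$, with matching value of your $f(\cP) = \eta\, x(\delta_{G[S]}(\cP)) - (|\cP|-1)$, since the edges removed by contraction are internal to some $S_i$ and cross neither partition; the reverse direction iteratively merges the blocks of an arbitrary partition of $S$ that meet a common $S_i$, each merge not increasing $f$ because $\eta$-well-connectedness of $S_i$ gives $\eta\, x(\delta_{G[S_i]}(\cQ|_{S_i})) \ge t-1$ when $S_i$ meets $t$ blocks, while the block count drops by exactly $t-1$. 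The paper instead argues on the primal side with convex combinations of spanning trees: for the forward direction it contracts a convex combination of spanning trees of $G[S]$ dominated by $\eta x$ to obtain connected spanning subgraphs of $G_S$; for the reverse direction it glues a spanning tree of $G_S$ with a spanning tree of each $G[S_i]$ into a spanning tree of $G[S]$, and notes this gluing passes through convex combinations and dominants by linearity. The paper's argument is shorter and avoids any case analysis or explicit monotonicity bookkeeping; yours is more hands-on, makes the role of the well-connectedness hypothesis on each $S_i$ very explicit, and naturally mirrors the merging/uncrossing style that the paper uses later for \Cref{lem:partitionwellconnected}. Both are sound.
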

\begin{proof}
    Let $y=\eta x_{\mid E(S)}$.

    First, if $y \in \Pstdom(G[S])$, then certainly $y_{\mid E_S} \in \Pstdom(G_S)$, since given any convex combination of spanning trees of $G[S]$ that dominates $y$, the same convex combination of the images of these spanning trees upon contracting $S_1, \ldots, S_r$ is a convex combination of connected spanning subgraphs of $G_S$ with marginals $y_{\mid E_S}$.

    Conversely, suppose that $y_{\mid E_S} \in \Pstdom(G_S)$.
    If $y_{\mid E_S}=\chi(T)$ for some spanning tree $T$ of $G_S$, and each $y_{\mid E(S_i)} = \chi(T_i)$ for some spanning tree $T_i$ of $S_i$, then the claim is clear; $T \cup T_1 \cup \cdots \cup T_r$ is a spanning tree of $G[S]$.
But the claim clearly remains true upon taking convex combinations, and moreover taking the dominant of any convex combination. 
%\todo{Improve; kind of obvious but I'm missing the clean phrasing.}
%    Let $\mathcal{T}'$ be the collection of spanning trees of $G_S$, and let $\mathcal{T}^{(i)}$ be the collection of spanning trees of $G[S_i]$ for each $i$.
%    We can choose $\phi: [0,1] \to \mathcal{T}'$ be such that $\int_0^1 \chi(\phi(z))dz \geq y|_{E_S}$, since $y \in \Pstdom(G_S)$; and we can choose $\phi^{(i)}: [0,1] \to \mathcal{T}^{(i)}$ for each $i$ so that $\int_0^1 \chi(\phi^{(i)}(z)dz \geq y|_{E(S_i)}$, since $y|_{E(S_i)} \in \Pstdom(G[S_i])$.
%    Let $\gamma(z) = \phi(z) \cup \phi^{(1)}(z) \cup \cdots \cup \phi^{(r)}(z)$ for each $z \in [0,1]$; then $\gamma(z)$ is a spanning tree of $G[S]$ for each $z$
%    But then $\int_0^1 \chi(dz \geq y$, demonstrating that $y \in \Pstdom(G[S])$.
\end{proof}

%\hypertarget{tar:alg}{\begin{algorithm}[h]
%        \begin{algorithmic}[1]
%    \While{not all sets $S \in \Ldeg$ are $\eta$-well-connected} 
%        \State Choose a minimal set $S \in \Ldeg$ which is not $\eta$-well-connected.
%        \State Consider the graph $G_S$ obtained from $G[S]$ by contracting all the children of $S$ in $\Ldeg$.
%        \State Compute a partition $\cP'$ of $G_S$ minimizing $\eta x(\deltaP_{G_S}(\cP')) - (|\cP'|-1)$. \label{step:partition} 
%        \State Let $\cP$ be the partition of $S$ obtained from $\cP'$ by uncontracting all children of $S$. \label{step:partition2}
%        \State Delete $S$ from $\Ldeg$ and add all parts of $\cP$ to $\Ldeg$.
%    \EndWhile
%    \State Return $\Ldeg$. 
%\end{algorithmic}
%\caption{Reduction to a new laminar family}\label{alg:reduction}
%\end{algorithm}}
\hypertarget{tar:alg}{\begin{algorithm}[h]
\begin{algorithmic}[1]
    \State $\Ldeg' \gets \emptyset$.
    \While{$\Ldeg$ is nonempty}
        \State Choose a minimal set $S \in \Ldeg$.
        \State Let $G_S$ be obtained from $G[S]$ by contracting all the maximal sets in $\Ldeg'$ contained in $S$.
        \State Compute a partition $\cP'$ of $G_S$ minimizing $\eta x(\deltaP_{G_S}(\cP')) - (|\cP'|-1)$.
            Let $\cP$ be the corresponding partition of $S$ obtained by uncontracting. \label{step:partition} 
%        \State Let $\cP$ be the partition of $S$ obtained from $\cP'$ by uncontracting all sets contracted to form $G_S$. \label{step:partition2}
        \State Delete $S$ from $\Ldeg$ and add all parts of $\cP$ to $\Ldeg'$.
    \EndWhile
    \State Return $\Ldeg'$. 
\end{algorithmic}
\caption{Reduction to a new laminar family}\label{alg:reduction}
\end{algorithm}}

%\noindent 
In this section we prove \cref{thm:reduction}, which heavily relies on \cref{alg:reduction}. 
This algorithm will be used to output the new family $\Ldeg'$ in the theorem statement. As such, we first prove some properties of this algorithm.

%\nnote{Could drop this lemma and put it in the main proof.}
\begin{lemma}\label{lem:polytime}
    \Cref{alg:reduction} can be implemented in polynomial time.
\end{lemma}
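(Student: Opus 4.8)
The plan is the standard two-part argument for showing an iterative procedure is polynomial: bound the number of iterations of the while loop, and bound the work done in a single iteration, with the only non-routine point being Step~\ref{step:partition}.

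For the iteration count, I would observe that each pass through the loop deletes exactly one set from $\Ldeg$ and never adds anything to $\Ldeg$; since $\Ldeg$ is a laminar family on $V$ it has at most $2|V|-1$ members to begin with, so the loop runs at most $2|V|-1$ times. I would also keep track of $|\Ldeg'|$: each iteration inserts the parts of a partition of some set $S\subseteq V$, i.e.\ at most $|V|$ new sets, so $|\Ldeg'| = O(|V|^2)$ at all times (in fact one can check that $\Ldeg\cup\Ldeg'$ remains laminar throughout, so $|\Ldeg'|\le 2|V|-1$, but the cruder bound is all that is needed). Given this, every bookkeeping operation in an iteration is clearly polynomial in $|V|$, $|E|$, $|\Ldeg|$ and $|\Ldeg'|$: selecting a minimal $S\in\Ldeg$ (by comparing pairs of sets), identifying the maximal members of $\Ldeg'$ contained in $S$ and contracting them to form $G_S$, uncontracting a partition of $G_S$ back to a partition $\cP$ of $S$, and deleting $S$ from $\Ldeg$ while adding the parts of $\cP$ to $\Ldeg'$.

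The substantive step is Step~\ref{step:partition}, which asks for a partition $\cP'$ of the vertex set of $G_S$ minimizing $\eta\,x(\deltaP_{G_S}(\cP')) - (|\cP'|-1)$. I would note that this is exactly the separation problem for the point $\eta\cdot x_{\mid E_S}$ (where $E_S$ denotes the edge set of $G_S$) over the dominant $\Pstdom(G_S)$, and is therefore solvable in polynomial time by \Cref{thm:domsep}. The only thing requiring a little care is the representation of $\eta$: for \Cref{thm:main2} one may assume $\eta$ is rational of polynomially bounded encoding length without loss of generality, so that $\eta\cdot x_{\mid E_S}$ is an explicitly given rational vector and \Cref{thm:domsep} applies verbatim. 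Combining the polynomial bound on the number of iterations with the polynomial cost of each iteration yields the lemma.

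The only mild obstacle is thus (a) controlling $|\Ldeg'|$ so it does not blow up over the course of the algorithm, and (b) recognizing Step~\ref{step:partition} as precisely the separation problem handled by \Cref{thm:domsep} rather than a harder optimization; once these two points are pinned down, everything else is routine.
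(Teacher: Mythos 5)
Your proof is correct and takes essentially the same route as the paper's: bound the number of iterations by $2|V|-1$ because each pass removes one set from $\Ldeg$, and observe that Step~\ref{step:partition} is exactly the separation problem for $\Pstdom(G_S)$ handled by \Cref{thm:domsep}. The paper's own proof is just a two-line version of this; your additional remarks about tracking $|\Ldeg'|$ and the encoding of $\eta$ are reasonable bookkeeping details the paper leaves implicit.
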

\begin{proof}
    In each iteration, $|\Ldeg|$ decreases, so there are at most $2|V|-1$ iterations.
    Each iteration can be implemented in polynomial time using \Cref{thm:domsep}.
\end{proof}

\begin{lemma}\label{lem:partitionwellconnected}
    Consider any graph $G=(V,E)$ and $\eta > 0$.
    Let $\cP$ be a partition of $G$ that minimizes $\eta x(\deltaP(\cP)) - (|\cP| - 1)$.
    Then each part of $\cP$ is $\eta$-well-connected.
\end{lemma}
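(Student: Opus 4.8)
The plan is to argue by contradiction via a single refinement (``uncrossing'') move, exploiting the fact that partition cuts are additive under refinement. Suppose some part $P$ of $\cP$ is not $\eta$-well-connected. Unfolding \cref{def:wellconnected}, there is then a partition $\cQ=\{Q_1,\dots,Q_m\}$ of $P$ with $\eta x(\deltaP_{G[P]}(\cQ))-(m-1)<0$. I would take this $\cQ$ and use it to refine $\cP$: let $\cP'$ be the partition of $V$ obtained from $\cP$ by deleting the block $P$ and inserting the blocks $Q_1,\dots,Q_m$, so that $|\cP'|=|\cP|+(m-1)$.

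The one computation that matters is the effect of this refinement on the cut. An edge of $G$ with at most one endpoint in $P$ crosses $\cP'$ if and only if it crosses $\cP$ (subdividing $P$ changes nothing for such an edge), while an edge with both endpoints in $P$ never crosses $\cP$ and crosses $\cP'$ if and only if it crosses $\cQ$. Hence $\deltaP(\cP')$ is the disjoint union of $\deltaP(\cP)$ and $\deltaP_{G[P]}(\cQ)$, so
\[ x(\deltaP(\cP')) = x(\deltaP(\cP)) + x(\deltaP_{G[P]}(\cQ)). \]
Combining this with $|\cP'|-1=(|\cP|-1)+(m-1)$, the objective decomposes additively:
\[ \eta x(\deltaP(\cP')) - (|\cP'|-1) = \bigl(\eta x(\deltaP(\cP)) - (|\cP|-1)\bigr) + \bigl(\eta x(\deltaP_{G[P]}(\cQ)) - (m-1)\bigr), \]
and since the second summand is strictly negative, $\cP'$ attains a strictly smaller value of $\eta x(\deltaP(\cdot))-(|\cdot|-1)$ than $\cP$, contradicting the minimality of $\cP$.

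I do not expect a genuine obstacle; this is a one-step exchange argument. The only care needed is in the edge bookkeeping, in particular verifying that refining $P$ creates no new crossing pair between $P$ and another block of $\cP$ (true because each $Q_i\subseteq P$ remains disjoint from every other block of $\cP$). The degenerate cases are subsumed: if $P$ is a singleton its only partition is trivial and well-connectedness holds vacuously, and if $G[P]$ is disconnected one may take $\cQ$ to be its connected components, giving $\deltaP_{G[P]}(\cQ)=\emptyset$ with $m\ge 2$, which is exactly a certificate that $P$ fails to be $\eta$-well-connected and to which the above argument applies verbatim.
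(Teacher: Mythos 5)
Your proof is correct and follows essentially the same route as the paper's: both refine $\cP$ by replacing a block $P$ with a partition $\cQ$ of $P$, note that the cut and part-count each decompose additively, and invoke the minimality of $\cP$. The paper states this directly (showing $\eta x(\delta_P(\cQ)) - (|\cQ|-1)\ge 0$ for arbitrary $\cQ$) whereas you phrase it as a contradiction, but the underlying exchange argument is identical.
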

\begin{proof}
Fix any part $P \in \cP$, and
    consider any partition 
    $\cQ = \{ Q_1, Q_2, \ldots, Q_r\}$ of $P$.
    Let $\cP'$ be the partition of $V'$ obtained by replacing $P$ with the parts of $\cQ$.

    Write $\delta_P(\cQ)$ for $\delta_{G[P]}(\cQ)$. %the set of edges crossing $\cQ$ in the graph $G'[P]$.
    Since $|\cP'| = |\cP| + |\cQ| - 1$ and $\eta x(\deltaP(\cP')) = \eta x(\deltaP(\cP)) + \eta x(\delta_P(\cQ))$,
    we have
    \[ \eta x(\delta_P(\cQ)) - (|\cQ|-1) = \eta x(\deltaP(\cP')) - (|\cP'|-1) - \bigl( \eta x(\deltaP(\cP)) - (|\cP|-1)\bigr). \]
    This is nonnegative, by our choice of $\cP$,
    and so $\eta x \in \Pstdom(G[P])$.
\end{proof}
\begin{lemma}\label{lem:lamwellconnected}
    The output $\Ldeg'$ of \Cref{alg:reduction} is a laminar family, and each $S \in \Ldeg'$ is $\eta$-well-connected.
\end{lemma}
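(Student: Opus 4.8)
The plan is to establish both claims by tracking the state of \Cref{alg:reduction} over its iterations. For laminarity I would maintain as an invariant that $\Ldeg\cup\Ldeg'$ is laminar at the beginning of each iteration; this is immediate at the start (when $\Ldeg'=\emptyset$ and $\Ldeg$ is laminar), so the work is to show a single iteration preserves it. For well-connectedness I would induct on the iteration in which a set is added to $\Ldeg'$, so that while processing a set $S$ I may assume every set presently in $\Ldeg'$ is already $\eta$-well-connected.

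For the laminarity step, consider the iteration that selects a minimal $S\in\Ldeg$ and replaces it in $\Ldeg$ by the parts $P_1,\dots,P_k$ of the partition $\cP$ of $S$ produced in \Cref{step:partition}. The $P_i$ are pairwise disjoint, so it suffices to check that each $P_i$ is laminar with every set of $(\Ldeg\setminus\{S\})\cup\Ldeg'$. The one point requiring care --- and, I expect, the main obstacle --- is a set $A\in\Ldeg'$ with $A\subsetneq S$: I would show every such $A$ lies inside a single part of $\cP$. This is because $A$ is contained in some maximal set $M$ of $\Ldeg'$ with $M\subseteq S$ (such maximal sets are pairwise disjoint by the invariant), and $M$ is contracted to one vertex of $G_S$, which lies in exactly one part of $\cP'$; un-contracting, $M\supseteq A$ lies in one part of $\cP$. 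The other cases are routine: if $A\in\Ldeg$ with $A\neq S$, minimality of $S$ in the \emph{current} family $\Ldeg$ rules out $A\subsetneq S$, so $A$ is disjoint from $S$ or contains $S$, and in either case is laminar with each $P_i$; and if $A\in\Ldeg'$ is disjoint from $S$ or contains $S$ the conclusion is immediate. Since $|\Ldeg|$ strictly decreases each iteration, at termination $\Ldeg=\emptyset$, and the invariant then says exactly that $\Ldeg'$ is laminar.

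For the well-connectedness step, fix the iteration that processes $S$ and a part $P=P_i$ of $\cP$, and let $M_1,\dots,M_r$ be the maximal sets of the current $\Ldeg'$ contained in $P$ --- equivalently, the contracted sets of $G_S$ whose corresponding vertices lie in the part $P'$ of $\cP'$ that un-contracts to $P$. By the inductive hypothesis each $M_j$ is $\eta$-well-connected, and the $M_j$ are disjoint subsets of $P$. The key identification is that $G_S[P']$ is precisely the graph obtained from $G[P]$ by contracting $M_1,\dots,M_r$ (this is just the statement that contraction commutes with restriction to a vertex set that is a union of contracted classes and singletons). Granting this: since $\cP'$ minimizes $\eta x(\deltaP_{G_S}(\cP'))-(|\cP'|-1)$ by construction, \Cref{lem:partitionwellconnected} applied to $G_S$ gives $\eta\cdot x_{\mid E(P')}\in\Pstdom(G_S[P'])$; then \Cref{lem:contractwellconnected}, with $P$ playing the role of $S$ and $M_1,\dots,M_r$ the role of $S_1,\dots,S_r$, upgrades this to $\eta\cdot x_{\mid E(P)}\in\Pstdom(G[P])$, i.e.\ $P$ is $\eta$-well-connected. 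This closes the induction, and combined with the first part proves the lemma.

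In short, the two previous lemmas do the real work; what remains is the invariant-maintenance argument for laminarity and the two essentially combinatorial bookkeeping facts --- that any $\Ldeg'$-set contained in $S$ falls entirely within one part of $\cP$, and that $G_S[P']$ is $G[P]$ with the relevant $\Ldeg'$-subsets contracted --- both of which follow from the fact that contracting a maximal class collapses it to a vertex that lands in a single part of $\cP'$.
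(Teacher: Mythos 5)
Your proof is correct and follows the same route as the paper: maintain the invariant that $\Ldeg\cup\Ldeg'$ is laminar with all of $\Ldeg'$ $\eta$-well-connected, apply \Cref{lem:partitionwellconnected} to $G_S$ to get well-connectedness of the parts of $\cP'$, and then \Cref{lem:contractwellconnected} to uncontract back to $G[P]$. You simply fill in, more explicitly than the paper does, the bookkeeping details (that $\Ldeg'$-sets inside $S$ land wholly inside one part of $\cP$, and that $G_S[P']$ is $G[P]$ with the relevant $\Ldeg'$-sets contracted), which the paper dismisses as "by construction."
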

\begin{proof}
    We claim that throughout the algorithm, we maintain the invariant that $\Ldeg \cup \Ldeg'$ is a laminar family, and that each $S \in \Ldeg'$ is $\eta$-well-connected.
    Certainly this holds at the start of the algorithm.
   Consider a partition  $\cP'$ of $G_S$ generated in step \ref{step:partition}. %-\ref{step:partition2} 
%   of \Cref{alg:reduction}.
   By  \Cref{lem:partitionwellconnected}, each part of $\cP'$ is $\eta$-well-connected.
   Then since the sets that were contracted in forming $G_S$ are $\eta$-well-connected, by \Cref{lem:contractwellconnected} all parts of $\cP$ are $\eta$-well-connected in $G$.
   Further, no part of $\cP$ crosses a set in $\Ldeg'$, by construction. 
   So the invariant is maintained. 
\end{proof}

The following is the main relevant quality of our reduction. 
\begin{lemma}\label{lem:smallsum}
    Let $S \in \Ldeg$ and let $\Ldeg'$ be the output of \Cref{alg:reduction}. Let $S_1,\dots,S_{\ell}$ be the unique maximal sets in $\Ldeg'$ whose union is $S$. Then, $\sum_{i=1}^{\ell} x(\delta(S_i)) \le \frac{1}{1-\frac{2}{\eta}}x(\delta(S)) - \frac{2}{\eta-2}$. 
\end{lemma}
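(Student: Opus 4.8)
The plan is to track what happens over the course of the run of \Cref{alg:reduction} restricted to the subtree of the laminar family rooted at $S$. Let me set up a potential-function argument. When \Cref{alg:reduction} processes a set $R$ (with $R \subseteq S$, or $R = S$), it forms $G_R$ by contracting the maximal already-processed sets $T_1, \ldots, T_m$ inside $R$, computes a minimizing partition $\cP'$ of $G_R$, and replaces $R$ by the uncontracted parts $P_1, \ldots, P_k$ of $\cP$. The key quantity to control is $\sum_j x(\delta(P_j))$ compared with $\sum_i x(\delta(T_i))$ (the sum over the children that got replaced) and $x(\delta(R))$. Note that $\sum_{j} x(\delta_G(P_j)) = x(\delta_G(R)) + 2 x(\delta_{G[R]}(\cP))$, since each edge of $G[R]$ crossing the partition $\cP$ is counted twice, and each edge leaving $R$ once. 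So I need to bound $x(\delta_{G[R]}(\cP))$, equivalently $x(\delta_{G_R}(\cP'))$ (these are equal, since uncontracting $\eta$-well-connected sets doesn't change the crossing edges of the partition — the $T_i$ sit inside parts of $\cP$).

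Now comes the main point: because $\cP'$ is the \emph{minimizer} of $\eta x(\deltaP_{G_R}(\cP')) - (|\cP'| - 1)$, and because the trivial partition of $G_R$ into a single part has value $0$, we get $\eta x(\delta_{G_R}(\cP')) \le |\cP'| - 1 = k - 1$. But also $G_R$ was formed by contracting $m$ sets into single vertices; if $G[R]$ has $n_R$ vertices then $G_R$ has $n_R - \sum_i(|T_i| - 1)$ vertices, and more usefully $k = |\cP'| \le$ (number of vertices of $G_R$). Actually the cleaner bookkeeping: I want to relate everything to a ``weight'' that telescopes. Assign to each set $R$ currently in the working family the weight $w(R) := x(\delta_G(R)) + \frac{2}{\eta - 2}$. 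Then I claim that each replacement step does not increase $\sum_R w(R)$ over the sets contained in $S$, i.e., $\sum_j w(P_j) \le w(R) + \sum_i w(T_i) - (\text{something})$... let me instead aim directly for: $\sum_j w(P_j) \le w(R) + \sum_i (w(T_i) \cdot \text{[correction]})$. The honest route is: $\sum_j x(\delta(P_j)) = x(\delta(R)) + 2x(\delta_{G_R}(\cP')) \le x(\delta(R)) + \tfrac{2}{\eta}(k-1)$, and I need to charge the $k-1$ against the quantity $\sum_i x(\delta(T_i))$ plus slack — but $k$ can be large even when the $T_i$ are few, so the bound must exploit that new small parts $P_j$ not containing any $T_i$ are ``cheap'' relative to the terms $\frac{2}{\eta-2}$ in $w$. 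Working out the arithmetic, one should find that the per-step inequality $\sum_j w(P_j) \le w(R) + \frac{2}{\eta}\sum_i w(T_i) \cdot \frac{\eta}{\eta-2}$ — or cleaner, $\sum_j w(P_j) \le (1 + \tfrac{2}{\eta-2}) \cdot (\text{incoming mass})$ scaled appropriately — telescopes from the root $S$ down to the leaves $S_1, \ldots, S_\ell$, yielding $\sum_i w(S_i) \le \frac{1}{1 - 2/\eta} x(\delta(S))$, which rearranges to exactly the claimed bound $\sum_i x(\delta(S_i)) \le \frac{1}{1-\frac{2}{\eta}} x(\delta(S)) - \frac{2}{\eta - 2}$ after subtracting the $\ell \cdot \frac{2}{\eta-2} \ge \frac{2}{\eta-2}$ of slack terms (using $\ell \ge 1$).

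Concretely the steps in order: (1) record the identity $\sum_j x(\delta_G(P_j)) = x(\delta_G(R)) + 2 x(\delta_{G_R}(\cP'))$ and that $x(\delta_{G_R}(\cP')) = x(\delta_{G[R]}(\cP))$; (2) use minimality of $\cP'$ against the one-part partition to get $\eta x(\delta_{G_R}(\cP')) \le |\cP'| - 1$; (3) bound $|\cP'| - 1$ in terms of $1 + \sum_i (\text{contribution of } T_i)$ — here each contracted vertex, together with the at-most-that-many extra parts, must be paid for using the $\frac{2}{\eta-2}$ reserves on the new parts and the reserves inherited from the $T_i$; (4) define the potential $\Phi = \sum_{R \text{ maximal in current family}, R\subseteq S} w(R)$, show $\Phi$ is non-increasing per step up to the initial value $w(S)$, i.e. $\Phi_{\text{final}} \le \frac{1}{1 - 2/\eta} x(\delta(S))$; actually, more carefully, show by induction on the processing order that $\sum_{i : S_i \subseteq R, S_i \text{ maximal in } \Ldeg'} w(S_i) \le \frac{1}{1-2/\eta} w(R) - \frac{2}{\eta-2}\cdot\frac{1}{?}$; (5) specialize to $R = S$ and rearrange. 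The main obstacle I anticipate is step (3)/(4): getting the constants to line up so that the factor $\frac{1}{1-2/\eta} = \frac{\eta}{\eta-2}$ and the additive $-\frac{2}{\eta-2}$ emerge exactly rather than with a looser constant. This requires being careful that when a set $R$ is split into $k$ parts, at most $k - 1 - (\text{number of contracted children} - \text{something})$ genuinely ``new'' boundary is created, and that the $\frac{2}{\eta-2}$ reserve on each of the (at least one) resulting parts is precisely enough to absorb the $\frac{2}{\eta}$ times the $|\cP'|-1$ term — i.e., verifying that $\frac{2}{\eta}\bigl((k-1) + \text{stuff}\bigr)$ telescopes against $k \cdot \frac{2}{\eta - 2} - \frac{2}{\eta-2}$, which is where the identity $\frac{2}{\eta-2} - \frac{2}{\eta} = \frac{4}{\eta(\eta-2)} = \frac{2}{\eta}\cdot\frac{2}{\eta-2}$ will be doing the work.
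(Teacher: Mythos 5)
You have the right local ingredients---the identity $\sum_j x(\delta_G(P_j)) = x(\delta_G(R)) + 2\,x(\delta_{G[R]}(\cP))$, the equality $x(\delta_{G_R}(\cP')) = x(\delta_{G[R]}(\cP))$, and the bound $\eta\, x(\delta_{G_R}(\cP')) \leq |\cP'| - 1$ from the minimality of $\cP'$ against the trivial partition---but you are building a multi-step machine where a single step suffices. Since \cref{alg:reduction} processes minimal sets first and only ever adds sets to $\Ldeg'$, the parts of the partition $\cP$ produced at the moment $S$ itself is processed immediately become, and forever remain, the maximal sets of $\Ldeg'$ contained in $S$. That is, $\{S_1,\dots,S_\ell\} = \cP$. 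So no potential function or telescoping over the descendants of $S$ is needed: you only need to account for that one iteration. The piece you were hunting for in your steps (3)--(4) is simply $x(\delta(S_i)) \geq 1$ (valid because $x \in \Pst(G)$ and each $S_i \subsetneq V$), which gives $\ell = |\cP| \leq \sum_i x(\delta(S_i))$. Plugging this into your own inequality
$\sum_i x(\delta(S_i)) \leq x(\delta(S)) + \tfrac{2}{\eta}(\ell - 1) \leq x(\delta(S)) + \tfrac{2}{\eta}\bigl(\sum_i x(\delta(S_i)) - 1\bigr)$
and rearranging yields the lemma exactly.

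Moreover, the potential-function reformulation you sketch is not merely overcomplicated but actually false as stated. With $w(R) := x(\delta(R)) + \frac{2}{\eta-2}$, the claim $\sum_i w(S_i) \leq \frac{1}{1-2/\eta}\,x(\delta(S))$ already fails for $\ell = 2$: take $x(\delta(S_1)) = x(\delta(S_2)) = 1$ with a single crossing edge of weight $1/\eta$, so that $x(\delta(S)) = 2 - 2/\eta$. Then $\sum_i w(S_i) = 2 + \frac{4}{\eta-2}$ while $\frac{\eta}{\eta-2}\,x(\delta(S)) = 2 + \frac{2}{\eta-2}$. The additive slack $\frac{2}{\eta-2}$ in $w$ accumulates once per part, so it grows linearly in $\ell$, while the lemma only subtracts off a single $\frac{2}{\eta-2}$; the two can only be reconciled by the direct argument above, not by a $w$-monotone invariant.
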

\begin{proof}
Consider the iteration of the algorithm where $S$ is deleted from $\Ldeg$, and a partition $\cP$ of $S$ (corresponding to a partition $\cP'$ of $G_S$) is added to $\Ldeg'$.
Then $\cP = \{S_1, \ldots, S_\ell\}$.
Note that by the choice of $\cP'$, $\eta x(\deltaP_{G_S}(\cP')) - (|\cP'|-1) \leq 0$ (either $\cP'$ is a violated constraint for $\Pstdom(G_S)$, or if $G_S$ is $\eta$-well-connected, $\cP'$ can be chosen to be the trivial partition of size $1$, and equality is attained).
Converting this to a statement about $\cP$, we have $\eta x(\deltaP_{G[S]}(\cP)) - (|\cP|-1) \leq 0$.
Thus
\begin{align*}
    x(\delta(S)) &= \sum_{i=1}^\ell x(\delta(S_i)) - 2x(\deltaP_{G[S]}(\cP)) \\
                 &\ge \sum_{i=1}^\ell x(\delta(S_i)) - \frac{2}{\eta}(|\cP|-1) \\ % \qquad &&\text{(as each $S_i$ is $\eta$-well-connected)} \\
                 &\ge \left(1-\frac{2}{\eta}\right)\sum_{i=1}^\ell x(\delta(S_i)) +\frac{2}{\eta} &&\text{(as $x(\delta(S_i)) \geq 1$ for each $S_i$).}
\end{align*}
The claim follows.
\end{proof}

We now prove \cref{thm:reduction}.
\reduction*
\begin{proof}
    First, apply \cref{alg:reduction} to $\cL$ to obtain a new family $\Ldeg'$ (which requires only polynomial time by \Cref{lem:polytime}). By \Cref{lem:lamwellconnected}, $\Ldeg'$ is a laminar family of $\eta$-well-connected sets.

We now show that $\eta x$ dominates a point $x'$ which is $\Ldeg'$-aligned, giving i) and ii). 
Let $G_S = (V_S, E_S)$ denote the graph obtained by restricting to $S \in \Ldeg'$ and contracting all children in $\Ldeg'$. % and let $x_S$ denote $x$ restricted to $E_S$. 
By definition of $\eta$-well-connected, for any $S \in \Ldeg'$,
$\eta x_{\mid E_S} \in \hyperlink{pstdom}{\Pstdom(G_S)}$. 
It follows that for every $S \in \Ldeg'$ we can find $y_S \in \Pst(G_S)$ with $y_S \leq \eta x_{\mid E_S}$. 
%This is a stronger condition than $y_S \in \Pbase{\Matroid(G[S])}$ (where $\Matroid$ is the graphic matroid).
    Combining $y_S$ for each $S$, we obtain $x' \in \Pst(G)$ with $x' \leq \eta x$, and where $x'$ is $\Ldeg$-aligned.
    
    It remains to show (iii).  Fix some $S \in \Ldeg$. The algorithm replaces $S$ by some partition $S_1, S_2, \ldots, S_\ell$ of $S$ in $\Ldeg'$.
    %Note that . Therefore:
    Then we have
    \begin{align*}
        \delt{S}{T} &\le \sum_{i=1}^\ell  \delt{S_i}{T}  && \text{(since $\bigcup_{i=1}^\ell \delta(S_i) \subseteq \delta(S)$)}\\
                    &\le \sum_{i=1}^\ell (\alpha  x'(\delta(S_i)) +\beta) &&\text{(by assumption)}\\ 
                    &\leq \sum_{i=1}^\ell (\eta \alpha x(\delta(S_i) + \beta) && \text{($x' \leq \eta x$)}\\
                    &\leq (\eta \alpha + \beta)\sum_{i=1}^\ell x(\delta(S_i)) 
                            &&\text{(since $x(\delta(S_i)) \geq 1$ for all $S_i$).}
%                    &\le \eta \alpha\Bigl(\sum_{i=1}^\ell  x(\delta(S_i))\Bigr)+\ell \beta \\
%                    &\le \frac{\eta}{1-\frac{2}{\eta}}\alpha x(\delta(S)) +\ell \beta.
    \end{align*}
    %By \Cref{lem:smallsum},  $\ell \le \frac{1}{1-\frac{2}{\eta}} x(\delta(S))$. 
    By \cref{lem:smallsum}, $\sum_{i=1}^\ell x(\delta(S_i)) \leq \frac{1}{1-\frac{2}{\eta}} x(\delta(S))$. 
    The claim follows.
%    So $\delt{S}{T} \le \frac{1}{1-\frac{2}{\eta}}(\eta \alpha + \beta)x(\delta(S))$ as desired. 
\end{proof}

\section{Laminar thin trees for $\Ldeg$-aligned points via iterative relaxation}\label{sec:rounding}

We will now prove \Cref{thm:rounding}, or rather a generalization of it where the graphic matroid is replaced by an arbitrary matroid. First, we define the obvious generalization of $\Ldeg$-aligned for a point in the base polytope of a matroid $\matroid$.

%We begin with a few notions related to matroids and the definition of $\Ldeg$-aligned points for general matroids.

%\nnote{$P_{\matroid}$ is probably more standard as notation for the independence polytope; should we use $B_{\matroid}$ instead? Or $P_{\text{base}}(\matroid)$?} \Nathan{I've seen $\baseP{\matroid}$ for the base polytope before, and I've seen $B_\matroid$ used just for the set of bases, e.g. \url{https://arxiv.org/pdf/1807.00929.pdf}. So personally I'm OK with the current notation.}

%Note that even if $\matroid$ is the graphic matroid on $G$, this generalizes our previous definition. To see this, note that the previous definition required $x(E(G_S)) = |V(G_S)|-1$. However, the below definition only requires that $x(E(G_S)) = \rank_\matroid(S)$, which may be smaller than $|V(G_S)|-1$ if the underlying graph is disconnected. 
\begin{definition}
    Given a graph $G=(V,E)$, a matroid $\matroid$ with groundset $E$, and a laminar family $\Ldeg$ of $G$, we say that a point $x \in \baseP{\matroid}$ is \emph{$\Ldeg$-aligned} if $x(E(S)) = \rank_\matroid(S)$ for all $S \in \Ldeg$.
\end{definition}
(In the case where $\matroid$ is a graphic matroid, this is just slightly different from the previous definition, if some sets in $\Ldeg$ are not connected. 
The previous definition did not allow for any $\Ldeg$-aligned points in this case, but here it is possible.
This relaxation of the definition is irrelevant; there is no real reason to consider disconnected sets in $\Ldeg$, since they could simply be split into their connected components.)

The following is the primary reason it is useful for a point $x$ to be $\Ldeg$-aligned in the iterative relaxation process.
\begin{lemma}\label{lem:integralES}
Let $x$ be $\Ldeg$-aligned. Let $S \in \Ldeg$ and let $S_1,\dots,S_k \in \Ldeg$ such that $S_i \cap S_j = \emptyset$ for all $1 \le i,j \le k$, $i \neq j$. Let $G_S=(V_S,E_S)$ be the graph arising from contacting $S_1,\dots,S_k$ in the graph $G[S]$. 

Then, $x(E_S)$ is an integer.
\end{lemma}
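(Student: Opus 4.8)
The plan is to reduce the statement directly to the defining property of $\Ldeg$-alignedness, using only that matroid rank is integer-valued. First I would identify $E_S$ explicitly in terms of the original graph. Contracting a vertex set deletes precisely the edges it induces (they become self-loops) and retains every other edge; since $S_1,\dots,S_k$ are pairwise disjoint vertex subsets of $S$, the induced edge sets $E(S_1),\dots,E(S_k)$ are pairwise disjoint and all contained in $E(S)$. Hence
\[
  E_S = E(S) \setminus \bigcup_{i=1}^k E(S_i),
  \qquad\text{so}\qquad
  x(E_S) = x(E(S)) - \sum_{i=1}^k x(E(S_i)).
\]

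Next I would invoke that $x$ is $\Ldeg$-aligned. Since $S \in \Ldeg$ and each $S_i \in \Ldeg$, the definition gives $x(E(S)) = \rank_\matroid(S)$ and $x(E(S_i)) = \rank_\matroid(S_i)$ for every $i$. Substituting into the displayed identity yields
\[
  x(E_S) = \rank_\matroid(S) - \sum_{i=1}^k \rank_\matroid(S_i),
\]
which is a difference of integers, hence an integer. This completes the argument.

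The only place requiring any care — and it is minor — is the first step: verifying that ``contracting $S_1,\dots,S_k$ in $G[S]$'' removes exactly $\bigcup_i E(S_i)$ and nothing more, and that these edge sets are genuinely disjoint so that $x(\cdot)$ splits additively over the complement. Disjointness of the $S_i$ as vertex sets is hypothesized (and $S_i \subseteq S$ is implicit in contracting inside $G[S]$, consistent with $\Ldeg$ being laminar), which immediately gives disjointness of $E(S_1),\dots,E(S_k)$. Beyond that there is no real obstacle: the lemma is essentially $\Ldeg$-alignedness applied simultaneously to $S$ and its relevant descendants, with integrality of $\rank_\matroid$ doing the rest; no further properties of $\matroid$ or of the laminar structure are needed.
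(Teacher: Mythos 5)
Your argument is correct and takes essentially the same route as the paper's: both identify $E_S = E(S) \setminus \bigcup_{i=1}^k E(S_i)$ and then apply $\Ldeg$-alignedness to each of $S, S_1, \ldots, S_k$ to express $x(E_S)$ as a difference of matroid ranks, hence an integer. You simply make explicit the disjointness of the $E(S_i)$ (needed for additivity of $x$), which the paper leaves implicit.
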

\begin{proof}
	Since $x(E(S)) = \rank_\matroid(S)$, it is an integer. Similarly, $x(E(S_i))$ is an integer for all $1 \le i \le k$. However $E_S = E(S) \smallsetminus (\cup_{i=1}^k E(S_i))$, from which the claim follows.
\end{proof}

Next, we define the notion of a \emph{matroid} (rather than a point) being $\Ldeg$-aligned. % for any \emph{matroid}.
\begin{definition}
    Given a graph $G=(V,E)$, a matroid $\matroid$ with groundset $E$, and a laminar family $\Ldeg$ of $G$, we say that 
    $\matroid$ is \emphdef{$\Ldeg$-aligned} if for any basis $B$ of $\matroid$, and every $S \in \Ldeg$, $B \cap E(S)$ is a basis of $\matroid_{\mid E(S)}$.
\end{definition}

The relationship between the notion of a matroid being $\Ldeg$-aligned, and a point $x \in \baseP{\matroid}$ being $\Ldeg$-aligned, is captured by the following lemma.
\begin{lemma}
    A matroid $\matroid$ is $\Ldeg$-aligned if and only if for every point $x \in \baseP{\matroid}$, $x$ is $\Ldeg$-aligned.
\end{lemma}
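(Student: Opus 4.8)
The plan is to derive both implications directly from the fact (recorded just above) that $\baseP{\matroid}$ is the convex hull of the incidence vectors $\chi(B)$ of bases $B$ of $\matroid$, together with the observation that the restriction $\matroid_{\mid E(S)}$ has rank exactly $\rank_\matroid(E(S))$, which is immediate from the definition of restriction.

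For the forward direction, I would assume $\matroid$ is $\Ldeg$-aligned and fix an arbitrary $x \in \baseP{\matroid}$, written as $x = \sum_{B} \lambda_B \chi(B)$ with $\lambda_B \ge 0$, $\sum_B \lambda_B = 1$, the sum over bases $B$ of $\matroid$. Fix $S \in \Ldeg$. For each basis $B$ appearing in the combination, $B \cap E(S)$ is by hypothesis a basis of $\matroid_{\mid E(S)}$, hence $\chi(B)(E(S)) = |B \cap E(S)| = \rank_\matroid(E(S))$. Taking the convex combination gives $x(E(S)) = \sum_B \lambda_B \rank_\matroid(E(S)) = \rank_\matroid(E(S))$, so $x$ is $\Ldeg$-aligned.

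For the reverse direction, assume every $x \in \baseP{\matroid}$ is $\Ldeg$-aligned, and let $B$ be an arbitrary basis of $\matroid$. Since $\chi(B) \in \baseP{\matroid}$, applying the hypothesis to $x = \chi(B)$ yields $|B \cap E(S)| = \chi(B)(E(S)) = \rank_\matroid(E(S))$ for every $S \in \Ldeg$. As $B \cap E(S)$ is an independent set of $\matroid_{\mid E(S)}$ whose cardinality equals the rank of $\matroid_{\mid E(S)}$, it is a basis of $\matroid_{\mid E(S)}$. Hence $\matroid$ is $\Ldeg$-aligned.

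I do not anticipate a genuine obstacle here: the statement is essentially an unpacking of the convex-hull description of $\baseP{\matroid}$ and the matching of extreme points with bases. The only points requiring a word of care are that $\rank(\matroid_{\mid E(S)}) = \rank_\matroid(E(S))$ and that an independent set of full rank in a matroid is a basis — both standard — so the write-up will be short.
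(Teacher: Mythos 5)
Your proof is correct and follows essentially the same route as the paper: the forward direction uses a convex-combination decomposition of $x$ into bases and the matroid-alignment hypothesis on each, while the reverse direction specializes to the characteristic vector of a basis. The only (cosmetic) difference is that you spell out explicitly why an independent set of full rank is a basis, which the paper leaves implicit.
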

\begin{proof}
    %\nnote{As Nathan pointed out, this is a bit like the other thing which was obvious for integral points, and then we're just using that the statement is closed under convex combination. So perhaps this proof can be written more simply.}
    First suppose $\matroid$ is $\Ldeg$-aligned and let $x \in \baseP{\matroid}$. Then, we can write $x$ as a convex combination of some bases $B_1,\dots,B_k$ of $\matroid$. Since $\matroid$ is $\Ldeg$-aligned, $B_i \cap E(S)$ is a basis of $\matroid_{\mid E(S)}$ for all $S \in \Ldeg$. Thus $|B_i \cap E(S)| = \rnk(\matroid_{\mid E(S)})$ for all $i$. It follows that $x(E(S)) = \rnk(\matroid_{\mid E(S)}) = \rank_\matroid(S)$ for all $S \in \Ldeg$ as desired, demonstrating that $x$ is $\Ldeg$-aligned.
    
    For the other direction, suppose every point $x \in \baseP{\matroid}$ is $\Ldeg$-aligned. 
    Then for any basis $B$ of $\matroid$, by taking $x$ to be the characteristic vector of $B$, we have
    $|B \cap E(S)| = x(E(S)) = \rnk(\matroid_{\mid E(S)})$.
    Thus $\matroid$ is $\Ldeg$-aligned.
\end{proof}

In the previous section, we saw how to reduce to the case where $x$ is a point in the base polytope of the graphic matroid that is $\Ldeg$-aligned.
It will be more convenient for our purposes to work with a matroid that is $\Ldeg$-aligned; this is a stronger property that will ensure that all fractional points we consider later in the iterative relaxation algorithm are all $\Ldeg$-aligned as well.
We can ensure this by \emph{refining} the matroid, in the sense defined in \cite{LOSZ20}.

\begin{definition}
    Given a matroid $\matroid$ and a nonempty proper subset $R$ of the groundset, the \emph{refinement} of $\matroid$ with respect to $R$ is the matroid $\matroid'$ obtained as the direct sum of $\matroid_{\mid R}$ and $\matroid / R$.
\end{definition}
Note that if $\matroid'$ is a refinement of $\matroid$, then every base of $\matroid'$ is a base of $\matroid$.
It is easy to show that for $R \subseteq E$ with $x(R) = \rank_{\matroid}(R)$, $x$ remains in the base polytope of the matroid obtained by refining $\matroid$ with respect to $R$ (see \cite{LOSZ20} for details).
As such, given a point $x \in \baseP{\matroid}$ that is $\Ldeg$-aligned, we can repeatedly refine $\matroid$ by each set of $\Ldeg$ in turn, to obtain a new matroid $\matroid'$ such that $x \in \baseP{\matroid'}$ and $\matroid'$ is $\Ldeg$-aligned.
%
%
%Given a matroid $\matroid$ and a point $x \in \baseP{\matroid}$ which is $\Ldeg$-aligned, we can \emph{refine} $\matroid$ to a new matroid $\matroid'$ which is $\Ldeg$-aligned, where every independent set of $\matroid'$ is independent in $\matroid$, and for which $x \in \baseP{\matroid'}$.
For $\matroid$ the graphic matroid, this refinement procedure corresponds to taking $\matroid'$ to be the direct sum of graphic matroids on $G_S$ for each $S \in \Ldeg$.
%\nnote{Extend discussion, cite LOSZ, check if there's anything more basic to cite.}

%So we may assume that $\matroid$ is tight for $\Ldeg$ in what follows.
So we consider the generalization of the laminar thin tree problem to matroids, under the restriction that the matroid is aligned with the laminar family.
An instance of the problem is defined by a graph $G=(V,E)$, a matroid $\matroid$ with groundset $E$, and a laminar family $\Ldeg$ with degree bounds $b_S$ for $S \in \Ldeg$, such that $\matroid$ is $\Ldeg$-aligned.
Edge costs $c_e$ may also be given. 
The goal is to find a minimum cost basis of $\matroid$ satisfying the cut constraints, if a solution exists.

The following LP is the natural relaxation that we will use.
Note that since $\matroid$ is $\Ldeg$-aligned, no explicit additional constraints on $x$ are required; \emph{any} feasible solution must satisfy $x_{\mid E(S)} \in P_{\matroid_{\mid E(S)}}$, and thus must be $\Ldeg$-aligned. % (using the natural generalization of \cref{def:Laligned} to matroids). 
\begin{equation}\label{eq:roundinglpmat}
\begin{aligned}
	\min \quad& \sum_{e \in E} x_e c_e& \\
	\text{s.t.} \quad & x(\delta(S)) \leq b_S &\forall S\in \Ldeg,\\
	& x \in \Pmat.
\end{aligned}	
\end{equation}

\begin{theorem}[Laminar-constrained matroid basis]\label{thm:rounding-matroid}
    Given an instance \hyperlink{problemstatement-matroid}{$(G, \matroid, \Ldeg, b)$} in which $\matroid$ is $\Ldeg$-aligned, and where the LP relaxation \eqref{eq:roundinglpmat} has a feasible solution $x$, we can find a basis $T$ of $\matroid$ in polynomial time for which $c(T) \leq c(x)$ and 
$\delt{S}{T} \le 2b_S+1$ for all $S \in \Ldeg$. 
\end{theorem}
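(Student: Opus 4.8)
The plan is to run the standard iterative-relaxation loop on LP \eqref{eq:roundinglpmat}: maintain a matroid $\matroid$ (with a fixed laminar family $\Ldeg$ with respect to which $\matroid$ is aligned), an optimal vertex solution $x$ of \eqref{eq:roundinglpmat}, and repeatedly either (a) delete an element $e$ with $x_e = 0$, (b) contract an element $e$ with $x_e = 1$ (adding it to the tree $T$ we are building), or (c) \emph{drop} a laminar constraint that has become ``useless'', i.e.\ can never be violated by more than the allowed additive slack regardless of how the rest of the rounding goes. The output is the set of contracted elements. Standard accounting gives $c(T) \le c(x)$ at the end: each step preserves feasibility of the residual LP with value non-increasing, contracting an element of cost $c_e$ with $x_e=1$ pays exactly $c_e$, and deletions/constraint-drops don't change anything. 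So the entire content is (i) showing the loop always makes progress — whenever no element has $x_e \in \{0,1\}$ we can drop a constraint — and (ii) bounding, for each $S \in \Ldeg$, the total number of tree edges eventually placed in $\delta(S)$.

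For progress: take a vertex $x$ of \eqref{eq:roundinglpmat} with no integral coordinates. Because $\matroid$ is $\Ldeg$-aligned, the only constraints are $x \in \Pmat$ together with the cut constraints $x(\delta(S)) \le b_S$, $S \in \Ldeg$. A vertex is determined by a maximal linearly independent set of tight constraints; the tight matroid-rank constraints can be taken to form a chain $\Ctight$ (uncrossing, as usual), and the tight cut constraints correspond to some $\Ltight \subseteq \Ldeg$. A counting argument in the style of Singh--Lau then shows that if no coordinate is integral and no constraint can be dropped, the number of independent tight constraints is strictly less than $|\supp(x)|$, a contradiction. The key structural input that makes the cut constraints ``cheap'' here is exactly $\Ldeg$-alignment combined with Lemma~\ref{lem:integralES}: for $S \in \Ltight$ with maximal tight proper descendants $S_1, \dots, S_k$ in $\Ltight$, the quantity $x(E_S)$ (edges of $G[S]$ after contracting the $S_i$) is an integer, which lets us charge the $\delta(S)$ constraint against the fractional edges strictly inside $S$ but outside the $S_i$'s. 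This is the analogue of the token-counting argument for bounded-degree spanning trees, and it is the step I expect to be the main obstacle: one has to set up the token assignment carefully so that, after the chain $\Ctight$ takes its share, every tight cut constraint still collects a surplus token, forcing a contradiction unless some cut constraint is droppable.

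For the thinness bound: fix $S \in \Ldeg$ and track $\delt{S}{T}$ as the algorithm proceeds. Whenever we contract an edge $e \in \delta(S)$ we permanently add $1$ to the count; the constraint $x(\delta(S)) \le b_S$ ensures that while the constraint is still present, the total fractional $x$-mass on $\delta(S)$ (contracted plus remaining) stays at most $b_S$, so at most $b_S$ edges get contracted into $\delta(S)$ over the whole run as long as the constraint lives. The constraint is dropped only once we can guarantee the \emph{residual} graph contributes at most $1$ further edge to $\delta(S)$ — that is the definition of ``droppable'' we will use: we drop the $S$-constraint when $\delt{S}{\text{(current element set of the residual matroid)}}$ has been reduced (by contraction/deletion) to at most one element plus at most one unit of fractional mass, so at most one more edge can ever cross. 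Combining: the $\le b_S$ edges contracted while the constraint was alive, plus the $\le 1$ extra edge afterward, gives $\delt{S}{T} \le b_S + 1$; a slightly more careful version of the same bookkeeping (allowing the constraint to be dropped when the residual mass on $\delta(S)$ is below $1$ but the number of residual crossing elements may be up to $b_S$, since those could all eventually be contracted... wait) yields the stated $2b_S + 1$. I would make the cleanest choice: drop the $S$-constraint as soon as at most $b_S+1$ elements of the current groundset cross $S$ (the standard relaxation criterion, using that $|\delta(S)| \le b_S+1$ means the constraint can be violated by at most an additive $1$ in the worst case), and verify that this criterion is always attainable in the progress step above; every element crossing $S$ at that moment may or may not end up in $T$, so $\delt{S}{T} \le$ (edges contracted into $\delta(S)$ before the drop) $+ (b_S+1) \le b_S + (b_S+1) = 2b_S+1$, as claimed. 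The main obstacle remains the progress/counting step; once that is in place, the cost and thinness bounds are the routine ledger described here.
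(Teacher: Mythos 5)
Your scaffold matches the paper's: iterative relaxation on LP~\eqref{eq:roundinglpmat} with delete/contract/drop steps, a Singh--Lau-style token count over a chain $\Cbasis$ of tight matroid constraints together with the tight laminar cuts $\LdegB$, using Lemma~\ref{lem:integralES} (and hence $\Ldeg$-alignment) to extract integrality. But there is a genuine gap, and you flag it yourself: the token-counting argument --- the entire content of the theorem --- is never carried out, and the drop rules you propose are not the ones that make it work.

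Two concrete problems. First, you propose a single, purely local drop rule (drop $S$ once $|\delta(S)| \le b_S + 1$). The paper needs a looser local rule (drop a tight $S$ once $\sum_{e \in \delta(S)}(1-x_e) < 3$, i.e.\ $|\delta(S)| \le b_S + 2$) and, crucially, a second \emph{pair} rule: drop $S \in \Ltight$ if there is $S' \in \Ltight$ with $\delta(S') \subseteq \delta(S)$ and $\sum_{e \in \delta(S) \smallsetminus \delta(S')}(1-x_e) < 2$. Without the pair rule the count fails for a tight $S$ with exactly one maximal tight child $S'$ satisfying $\delta(S') \subseteq \delta(S)$: the edges of $\delta(S')$ give their half-tokens to $S'$ or below, so $S$ collects only $\tfrac12\sum_{e \in \delta(S) \smallsetminus \delta(S')}(1-x_e)$ directly plus the $\tfrac12$ excess bubbled up from $S'$, and nothing else forces this to reach the required total. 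The pair rule is exactly what yields $\sum_{e \in \delta(S) \smallsetminus \delta(S')}(1-x_e) \ge 2$ when no drop fires, and the $2b_S + 1$ guarantee must then be re-verified separately for this drop mode (using that $x(\delta(S)\smallsetminus\delta(S')) = b_S - b_{S'} \in \Z$, hence $|\delta(S)\smallsetminus\delta(S')| \le 1 + b_S - b_{S'}$, combined with the inductive bound $\delt{S'}{B} \le 2b_{S'}+1$).

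Second, even granting the correct drop rules, the token argument requires a nontrivial case split on the number of maximal children of a tight cut in $\LdegB$ (three or more; exactly two; exactly one), together with choosing the chain $\Cbasis$ of tight matroid constraints to be inclusion-maximal so that $E(S) \in \Span(\Cbasis)$ for every $S \in \Ldeg$ --- this maximality is what rules out a linear dependence in the two-child case. None of this appears in the proposal. You have correctly located where the work has to happen, but the work itself is missing.
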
 
\Cref{thm:rounding} is an immediate consequence, by first refining the graphic matroid as described above.

The algorithm we will use to prove this theorem is shown in \Cref{alg:rounding-matroid}. Our algorithm follows the usual iterative relaxation recipe: it ignores edges set to 0 and 1 and then drops constraints which are close to being satisfied. We have one non-standard step which drops a set in $\Ldeg$ if it is \textit{approximately implied} by its immediate parent or child in the family of tight constraints. This non-standard step is what leads to a multiplicative violation instead of an additive one. 
%\todo{Informal description of algorithm.}

\hypertarget{tar:rounding-alg-matroid}{
    \begin{algorithm}[h]
\begin{algorithmic}[1]
    \Require{Instance $(G=(V,E, c), \matroid, \Ldeg, b)$ where $\matroid$ is tight for $\Ldeg$ and \eqref{eq:roundinglpmat} is feasible.}
\Ensure{Basis $B$ of $\matroid$.}% satisfying $|T \cap \delta(S)| \leq 7b_S$ for every $S \in \Ldeg$.}
    \State If $E = \emptyset$, \Return{$\emptyset$}.
    \State Let $x$ be a basic optimal solution to \eqref{eq:roundinglpmat}. 
    \State If there is an edge $e$ with $x_e=0$, \Return{$\text{\sc LamConstrainedBasis}(G - e,\matroid - e, \Ldeg, b)$}. \label{step:delete}
    \State If there is an edge $e$ with $x_e=1$, \Return{$\{e\} \cup \text{\sc LamConstrainedBasis}(G - e,\matroid/e, \Ldeg, b')$}, where $b'_S = b_S$ if $e \notin \delta(S)$, and $b'_S = b_S -1$ if $e \in \delta(S)$. \label{step:contract}
        \State Let $\Ltight$ be the set of cuts $S \in \Ldeg$ with $x(\delta(S)) = b_S$. %\nnote{OR the defining ones.}
        %\State \label{step:relax} 
        \State If there is a set $S \in \Ltight$ for which either 
        $\sum_{e \in \delta(S)} (1-x_e) < 3$, or there is an $S' \neq S \in \Ltight$ with $\delta(S') \subseteq \delta(S)$ and $\sum_{e \in \delta(S) \setminus \delta(S')} (1-x_e) <  2$, then \Return{$\text{\sc LamConstrainedBasis}(G,\matroid, \Ldeg \setminus \{S\}, b)$}. \label{step:drop}
        \State \Return{``Fail''}. \Comment{Should not reach this line} \label{step:fail}
%	\State Return ``Fail''.
\end{algorithmic}
\caption{Procedure \textsc{LamConstrainedBasis}, used to demonstrate \Cref{thm:rounding-matroid}.}
\label{alg:rounding-matroid}
\end{algorithm}
}

%\subsection{Improved factor second attempt}
%
%\begin{algorithm}[h]
%\begin{algorithmic}[1]
%	\State Set $B=\emptyset$.
%	\While{$G$ has at least one edge}
%        \State Let $x$ be a basic optimal solution to \eqref{eq:roundinglpmat}. 
%	\State For each edge $e$ with $x_e = 0$, delete $e$ from $G$ and from $\matroid$.
%	\State For each edge $e$ with $x_e = 1$, add $e$ to $B$, delete $e$ from $G$, and contract $e$ in $\matroid$: $\matroid \gets \matroid / e$; further, decrement $b_S$ for each $S \in \Ldeg$ with $e \in \delta(S)$.
%        \State Let $\Ltight$ be the set of cuts $S \in \Ldeg$ with $x(\delta(S)) = b_S$. %\nnote{OR the defining ones.}
%        \State \label{step:relax}If for any set $S \in \Ltight$,
%        $\sum_{e \in \delta(S)} (1-x_e) < 3$, or there is an $S' \in \Ltight$ with $S' \neq S$, $\delta(S') \subseteq \delta(S)$ and $\sum_{e \in \delta(S) \setminus \delta(S')}(1-x_e) < 2$, then delete $S$ from $\Ldeg$.
%	\EndWhile
%	\State Return $B$.
%\end{algorithmic}
%\caption{Iterative relaxation algorithm used to demonstrate \Cref{thm:rounding-matroid}.}
%\label{alg:rounding-matroid-iterative}
%\end{algorithm}

If a recursive call to \lamconstrainedbasis{} returns ``Fail'', then we consider that the result of the procedure as a whole is also ``Fail''.
We also note that if \lamconstrainedbasis{} is recursively called in any of steps \ref{step:delete}, \ref{step:contract} or \ref{step:drop}, the required properties of the input to the recursive call are satisfied.
In particular, \eqref{eq:roundinglpmat} is feasible. For steps \ref{step:delete} and \ref{step:contract}, $x_{\mid E -e}$ is feasible for the smaller instance; for step \ref{step:drop}, simply $x$ is.
With this in mind, \lamconstrainedbasis{} is well-defined.

We first show that as long as the algorithm does succeed, the returned basis obeys the theorem statement.
\begin{lemma}
%Let $(G=(V,E), \matroid, \lam, b)$ be an instance of the minimum cost laminar-constrained basis problem, with $x \in \Pmat$ a feasible fractional solution. 
If \Cref{alg:rounding-matroid} does not return ``Fail'', the returned set $B$ is a basis and obeys $c(B) \le c(x)$ and $\delt{S}{B} \le 2b_S+1$ for all $S \in \Ldeg$. 
\end{lemma}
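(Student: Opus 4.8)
The plan is to prove this by induction on the recursion tree of \lamconstrainedbasis{} --- equivalently, on the quantity $|E| + |\Ldeg|$, which strictly decreases at every recursive call (an edge is removed in steps~\ref{step:delete} and~\ref{step:contract}, a constraint is removed in step~\ref{step:drop}). The inductive hypothesis is precisely the statement of the lemma, applied to the instance passed to the current call, with $x$ the basic optimal solution computed \emph{there}. The excerpt already observes that every recursive call receives a valid instance, and in particular that \eqref{eq:roundinglpmat} remains feasible, witnessed by the appropriate restriction of the current $x$ (namely $x_{\mid E-e}$ in steps~\ref{step:delete} and~\ref{step:contract}, and $x$ itself in step~\ref{step:drop}), which also bounds the optimal cost of the recursive LP by $c(x)$. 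So what remains is to check that the returned set is a basis of the \emph{current} matroid and that the cut bounds propagate.

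For the base case $E = \emptyset$ the returned $\emptyset$ is the unique basis and the conclusions hold trivially. For step~\ref{step:delete} ($x_e=0$): every basis in a convex decomposition of $x$ avoids $e$, so $\rnk(\matroid-e)=\rnk(\matroid)$ and hence a basis of $\matroid-e$ is a basis of $\matroid$; moreover $c(x')\le c(x_{\mid E-e})=c(x)$, and since no $\delta$-set changes the cut bounds carry over verbatim. For step~\ref{step:contract} ($x_e=1$): $e$ lies in every basis in the decomposition, so $e$ is not a loop and $\rnk(\matroid/e)=\rnk(\matroid)-1$, whence $\{e\}\cup B'$ is a basis of $\matroid$ whenever $B'$ is a basis of $\matroid/e$; the cost bound follows from $c(x')\le c(x)-c_e$, and the cut bound from $\delt{S}{B} = \delt{S}{B'} + [e\in\delta(S)] \le 2b'_S + 1 + [e\in\delta(S)] = 2b_S + 1 - [e\in\delta(S)] \le 2b_S+1$, using $b'_S = b_S - [e\in\delta(S)]$.

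The interesting case is step~\ref{step:drop}, where a constraint $S\in\Ltight$ is deleted; by induction the recursive call returns a basis $B$ of $\matroid$ with $c(B)\le c(x)$ and $\delt{S''}{B}\le 2b_{S''}+1$ for all $S''\in\Ldeg\setminus\{S\}$, and I only need to bound $\delt{S}{B}$. I will use that, having passed steps~\ref{step:delete} and~\ref{step:contract}, every surviving edge has $0<x_e<1$. If $S$ was dropped because $\sum_{e\in\delta(S)}(1-x_e)<3$, then with $x(\delta(S))=b_S$ this gives $|\delta(S)|<b_S+3$, i.e.\ $|\delta(S)|\le b_S+2$; since $\delta(S)\neq\emptyset$ forces $b_S = x(\delta(S))\ge 1$ (and $\delta(S)=\emptyset$ is trivial), we get $\delt{S}{B}\le|\delta(S)|\le b_S+2\le 2b_S+1$. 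If instead $S$ was dropped because of some $S'\in\Ltight\setminus\{S\}$ with $\delta(S')\subseteq\delta(S)$ and $\sum_{e\in\delta(S)\setminus\delta(S')}(1-x_e)<2$, then writing $\delta(S)$ as the disjoint union $\delta(S')\,\dot\cup\,(\delta(S)\setminus\delta(S'))$, tightness of $S$ and $S'$ gives $x(\delta(S)\setminus\delta(S'))=b_S-b_{S'}$, hence $|\delta(S)\setminus\delta(S')|\le b_S-b_{S'}+1$, so $\delt{S}{B}\le\delt{S'}{B}+|\delta(S)\setminus\delta(S')|\le(2b_{S'}+1)+(b_S-b_{S'}+1)=b_S+b_{S'}+2$.

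The one subtle point --- the step I expect to need the most care --- is closing this last inequality, since $b_S+b_{S'}+2\le 2b_S+1$ only when $b_{S'}\le b_S-1$. To handle the borderline case I will use that $b_{S'}\le b_S$ always (from $\delta(S')\subseteq\delta(S)$ and $x\ge 0$), and that if $b_{S'}=b_S$ then $x(\delta(S)\setminus\delta(S'))=0$; since no surviving edge is zero, this forces $\delta(S)=\delta(S')$, and then $\delt{S}{B}=\delt{S'}{B}\le 2b_{S'}+1=2b_S+1$. With both sub-cases closed the induction goes through, and applying the statement to the top-level instance yields the lemma.
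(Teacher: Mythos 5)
Your proposal is correct and follows essentially the same inductive argument as the paper: induction on $|E|+|\Ldeg|$, with the delete and contract steps handled directly, and the two drop sub-cases closed via integrality of $b_S - b_{S'}$ and the strict inequality $0<x_e<1$ on surviving edges. The only differences are presentational — you make explicit a few points the paper leaves implicit (that $b_S\ge 1$ when $\delta(S)\neq\emptyset$, and that $b_{S'}=b_S$ forces $\delta(S')=\delta(S)$ rather than treating the duplicate constraint as a separate observation).
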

\begin{proof}
We prove the claim by induction on $|E| + |\Ldeg|$.
The claim is trivially true if $E = \emptyset$.

So suppose the claim holds for all smaller values of $|E|+|\Ldeg|$.
If $x_e = 0$ for some $e$ in step \ref{step:delete}, then the claim is immediate; as long as the recursive call succeeds, returning a basis $B'$ of $\matroid-e$ approximately satisfying the constraints, then $B=B'$ is of course a basis of $\matroid$ still approximately satisfying the constraints. Furthermore, since $c(B') \le c(x')$ where $x'$ is a basic optimal solution to the problem on $\matroid-e$, and $x_{\mid E \setminus \{e\}}$ is feasible for the problem on $\matroid-e$, $c(B) = c(B') \le c(x') \le c(x)$. 
If $x_e = 1$ for some $e$ in step \ref{step:contract}, and the recursive call succeeds and returns a basis $B'$ of $\matroid/e$, then $B := B' \cup \{e\}$ is a basis of $\matroid$.
Further, for any set $S \in \Ldeg$ with $e \notin \delta(S)$, we have 
\[ |B \cap \delta(S)| = |B' \cap \delta(S)| \leq 2b'_S + 1 = 2b_S + 1. \]
On the other hand if $e \in \delta(S)$, we have
\[ |B \cap \delta(S)| = |B' \cap \delta(S)| + 1 \leq 2b'_S + 2 < 2b_S + 1. \]
Finally, since $c(B') \le c(x')$ where $x'$ was a basic optimal solution to the problem on $\matroid/e$, and $x_{\mid E\setminus \{e\}}$ is feasible for the problem on $\matroid/e$, $c(B) = c(B')+c(e) \le c(x')+c(e)\le c(x)$.

It remains to consider the situation where we drop a constraint in step \ref{step:drop}.
Suppose a set $S \in \Ltight$ is dropped because $|\delta(S)| - x(\delta(S)) = \sum_{e \in \delta(S)}(1-x_e) < 3$.
Since the constraint is tight, we deduce that $|\delta(S)| - b_S < 3$, and so $|\delta(S)| \leq b_S + 2 \leq 2b_S+1$ as desired.

Now suppose $S \in \Ltight$ is dropped because there is an $S' \neq S \in \Ltight$ with $\delta(S') \subseteq \delta(S)$ and $\sum_{e \in \delta(S) \setminus \delta(S')}(1-x_e) < 2$. 
By tightness, $x(\delta(S)\setminus \delta(S')) = x(\delta(S)) - x(\delta(S')) = b_S - b_{S'}$ is an integer.
Note that either $\delta(S') = \delta(S)$, in which case clearly we can drop the duplicate constraint, or $b_S > b_{S'}$; assume the latter.
We have $|\delta(S) \setminus \delta(S')| \leq 1 + b_S - b_{S'}$.
Suppose $B$ is any basis satisfying $|B \cap \delta(S')| \leq 2b_{S'}+1$.
Then 
\begin{align*} |B \cap \delta(S)| &\leq |B \cap \delta(S')| + |\delta(S) \setminus \delta(S')|\\
&\leq (2b_{S'} + 1) + 1 + b_S - b_{S'}\\
&= b_{S'} + 2 + b_S\\
&\leq (b_S - 1) + 2 + b_S \leq 2b_S + 1.
\end{align*}
Of course, dropping a constraint can only decrease the cost of a basic optimal solution to \eqref{eq:roundinglpmat}, so $c(B) \leq c(x)$ is immediate by induction in this case.
%Finally, note that the cost of $B$ is at most $c(x)$, where $x$ is an LP solution to \eqref{eq:roundinglpmat}. Whenever we put an edge into the basis, we had $x_e=1$, thus we increase the cost of $B$ by the same amount as we reduce the cost of the LP. Thus this simply follows from the fact that by dropping constraints from $\Ldeg$ we do not increase the cost of the optimal solution to \eqref{eq:roundinglpmat}.
\end{proof}

%    \todo{Set this up inductively over the iterations of the loop. Holds at the start of the loop, needs to hold at the end. Trivial for deletion, say something for contraction and in particular that things are multiplicatively fine.}
%    
%    Suppose a tight set $S \in \Ldeg$ is dropped because $|\delta(S)| - x(\delta(S)) = \sum_{e \in \delta(S)}(1-x_e) < 3$.
%    Since the constraint is tight, we deduce that $|\delta(S)| - b_S < 3$, and so $|\delta(S)| \leq b_S + 2 \leq 2b_S+1$ as desired.
%
%    Now suppose $S \in \Ltight$ is dropped because there is an $S' \neq S \in \Ltight$ with $\delta(S') \subseteq \delta(S)$ and $\sum_{e \in \delta(S) \setminus \delta(S')}(1-x_e) < 2$. 
%    By tightness, $x(\delta(S)\setminus \delta(S')) = x(\delta(S)) - x(\delta(S')) = b_S - b_{S'}$ is an integer.
%    Note that either $\delta(S') = \delta(S)$, in which case clearly we can drop the duplicate constraint, or $b_S > b_{S'}$; assume the latter.
%    We have $|\delta(S) \setminus \delta(S')| \leq 1 + b_S - b_{S'}$.
%    Suppose $B$ is any basis satisfying $|B \cap \delta(S')| \leq 2b_{S'}+1$.
%    Then 
%    \begin{align*} |B \cap \delta(S)| &\leq |B \cap \delta(S')| + |\delta(S) \setminus \delta(S')|\\
%    &\leq (2b_{S'} + 1) + 1 + b_S - b_{S'}\\
%    &= b_{S'} + 2 + b_S\\
%    &\leq (b_S - 1) + 2 + b_S \leq 2b_S + 1.
%    \end{align*}
%\end{proof}

Now we are ready to prove the theorem. 
\begin{proof}[Proof of \Cref{thm:rounding-matroid}]
By the above lemma, it is enough to prove that the algorithm succeeds.
For this, it suffices to show that whenever the preconditions of \lamconstrainedbasis{} are satisfied, the procedure never reaches step \ref{step:fail}.

Suppose for a contradiction that we do reach step \ref{step:fail}.
By assumption, none of the constraints defining the extreme point $x$ are of the form $x_e = 0$ or $x_e = 1$, so they all come from tight cut constraints and tight matroid constraints.
Let $\Cbasis = \{C_1, C_2, \ldots, C_r\}$, with $C_1 \subsetneq C_2 \cdots \subsetneq C_r \subseteq E$ and $\LdegB \subseteq \Ltight$ 
be such that the constraints 
$x(\delta(S)) = b_S$ for $S \in \LdegB$ and $x(C) = \rnk_\matroid(C)$ for $C \in \Cbasis$ are a collection of linearly independent tight constraints defining $x$. 
Moreover, choose this basis of tight constraints in such a way that $|\Cbasis|$ is as large as possible.
The fact that the tight matroid constraints form a chain follows from standard uncrossing arguments (see \cite{Sch03} Chapter 41 or \cite{KLS08}). 
Since there are precisely $|E|$ defining constraints, we have
\[ |E| = |\LdegB| + |\Cbasis|. \] %  \leq |\Ltight| + |\Ctight|. \] 
We note that since $\matroid$ is $\Ldeg$-aligned, the maximality of $\Cbasis$ ensures that $E(S) \in \Span(\Cbasis)$ for each $S \in \Ldeg$.

%We have that $|E| = |\supp(x)| = |\Ltight| + |\Ctight|$.
Assign 1 splittable token to each $e \in E$; our goal will be to assign these tokens to the constraints of $\LdegB$ and $\Cbasis$ so that each tight constraint gets 1 token, and there is something left over. This will be our desired contradiction.
%\nnote{Since we use independence more heavily now, this is assumedly no longer the case, we can't afford this. I guess we still want to define $\Ltight$ in the algorithm though? Just that we should only use basis in the analysis. TODO change $\Ctight$ to also say basis.}

We will assign $x_e$ tokens to $C_i$ for each $e \in C_i \setminus C_{i-1}$.
Since $0 < x_e < 1$ for each $e$, and $x(C_i)$ and $x(C_{i-1})$ are both integers with $x(C_{i-1}) < x(C_i)$, we can deduce that $x(C_i \setminus C_{i-1}) \geq 1$. 

Now each edge has $1-x_e$ tokens remaining. Our token assignment scheme will be as follows.
We start with an assignment that is very reminiscent of the scheme for degree bounded spanning trees~\cite{SL15}.
For each $e=\{u,v\}$, we assign $(1-x_e)/2$ tokens to the smallest set in $\LdegB$ containing $u$, and $(1-x_e)/2$ tokens to the smallest set in $\LdegB$ containing $v$.
After this, we work bottom up on $\LdegB$, and if $S \in \LdegB$ has strictly more than the 1 token needed, we assign the excess to its parent in $\LdegB$.

First, any minimal set $S \in \LdegB$ satisfies $\sum_{e \in \delta(S)} (1-x_e) \geq 3$, meaning that at least $\frac{3}{2}$ tokens are initially assigned to $S$.
So $S$ receives enough tokens to give a half token as excess to its parent.
Inductively, we claim that every set gets 1 token, and moreover, has an excess of at least $\frac{1}{2}$ that it can give to its parent.
For any non-minimal $S \in \LdegB$, we have three cases depending on the number of disjoint maximal children of $S$ in $\LdegB$. In each case we will consider the graph $G_S=(V_S,E_S)$ resulting from contracting the maximal children of $S$ in $\LdegB$ in the graph $G[S]$. In Cases 2 and 3 we crucially use that $x(E_S)$ is an integer by \cref{lem:integralES}.

\begin{itemize}[-]
    \item \textbf{Case 1: $S$ has at least three maximal children in $\LdegB$.} %\nnote{Is it clear what this means? That we looking at the laminar family $\LdegB$ when considering the notion of children, and not looking say at children in $\Ldeg$ and seeing how many are in $\LdegB$?}

        Then inductively, each of these children has an excess of at least $\frac{1}{2}$.
        This gives us at least $\frac{3}{2}$ tokens for $S$, as desired.
        
        \medskip

	\item \textbf{Case 2: $S$ has exactly two maximal children $A,B \in \LdegB$.}

	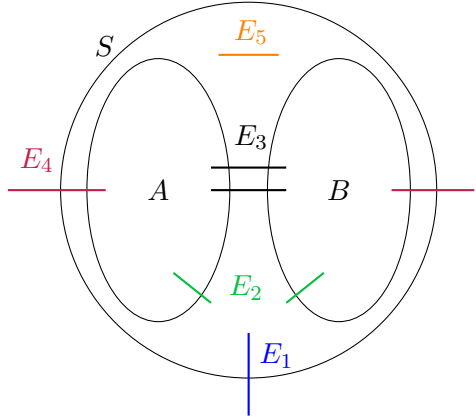
\begin{figure}[htb!]
\centering
\begin{tikzpicture}
\draw (0,0) node[above left=1.65cm and 1.65cm] {$S$} circle (2.5cm);
\draw (-1.2,0) node[ellipse, draw=black, minimum width = 1.9cm, 
    minimum height = 3.5cm] (C) {$A$};
\draw (1.2,0) node[ellipse, draw=black, minimum width = 1.9cm, 
    minimum height = 3.5cm] (C) {$B$};
\draw[-, thick] (-0.5,0.3) -- (0.5,0.3) node[above left=0.1cm and 0.1cm] {$E_3$};
\draw[-, thick] (-0.5,0) -- (0.5,0) node {};

\draw[-, thick,blue] (0,-3) -- (0,-1.9) node[below right,text=blue] {$E_1$};

\draw[-, thick, darkgreen] (-1,-1.1) -- (-0.5,-1.5) node[above right=-0.1cm and 0.1cm,text=darkgreen] {$E_2$};
\draw[-, thick, darkgreen] (1,-1.1) -- (0.5,-1.5) node {};

\draw[-, thick, orange] (-0.4,1.8) -- (0.4,1.8) node[above left,text=orange] {$E_5$};

\draw[-, thick, darkred] (-3.2,0) -- (-1.9,0) node[text=red] {} node[above left=0.1cm and 0.55cm] {$E_4$};;
\draw[-, thick, darkred] (3,0) -- (1.9,0) node {};
\end{tikzpicture}
\caption{Setting for Case 2. Note some edge sets may be empty.}\label{fig:case2}
\end{figure}

	Inductively, each child has an excess of at least $\frac{1}{2}$, giving us at least one token. Thus we need to collect at least $\frac{1}{2}$ additional tokens. 
	
	Consider the edge sets as defined in \cref{fig:case2}. In particular,
	\begin{align*}
		E_1 &= \delta(S) \smallsetminus (\delta(A) \cup \delta(B)) \\
		E_2 &= (\delta(A) \symdiff \delta(B)) \smallsetminus \delta(S) \\
		E_3 &= \delta(A) \cap \delta(B) \\
		E_4 &= (\delta(A) \cup \delta(B)) \cap \delta(S) \\
		E_5 &= E_S \smallsetminus (E_2 \cup E_3)
	\end{align*}
	First we observe that $E_1 \cup E_2 \cup E_5$ is nonempty. 
        For suppose not; then, with $\chi$ denoting the incidence vector of a set,
        we can write 
        \[ \indset{\delta(S)} + 2\indset{E(S)} = \indset{\delta(A)} + \indset{\delta(B)} + 2\indset{E(A)} + 2\indset{E(B)}. \]
    However, by the maximality of our choice of $\Cbasis$, $E(A)$, $E(B)$ and $E(S)$ are all in the span of $\Cbasis$, whereas $\delta(S)$, $\delta(A)$ and $\delta(B)$ are all in $\LdegB$. 
    Thus we have a linear dependence among the constraints defined by $\Cbasis$ and $\LdegB$, a contradiction.

    So $x(E_1) + x(E_2) + x(E_5) > 0$. Therefore, we get
	\begin{align*}
		\frac{|E_1|+|E_2|-x(E_1)-x(E_2)}{2}+|E_5|-x(E_5) &= z-\left(\frac{x(E_1)+x(E_2)}{2}+x(E_5)\right) > 0
	\end{align*}
        fractional tokens for some $z \in \Z_{\geq 0}$. We will prove that $\frac{x(E_1)+x(E_2)}{2}+x(E_5)$ is half integral, %(i.e. it is equal to $z/2$ for some $z \in \Z_{\ge 0}$), 
        from which the claim follows. By the integrality of $x(E_S)$ (using $\Ldeg$-alignment) and the tightness of the constraints on $A,B$ and $S$, we have that 
	\begin{align*}
            a := x(E_2)+x(E_3)+x(E_5), \hspace{2mm} b := x(E_2)+2x(E_3)+x(E_4) \hspace{2mm} \text{and} \hspace{2mm} c := x(E_1)+x(E_4)
	\end{align*}
        are all integers.
        Since $a-b/2 + c/2 = 
	%Taking the first equation minus $\frac{1}{2}$ times the second equation plus $\frac{1}{2}$ times the third equation gives $
        \frac{x(E_1)+x(E_2)}{2}+x(E_5)$, the claim follows.
	
		\medskip 
    \item \textbf{Case 3: $S$ has precisely one maximal child $S'$ in $\LdegB$.}

        	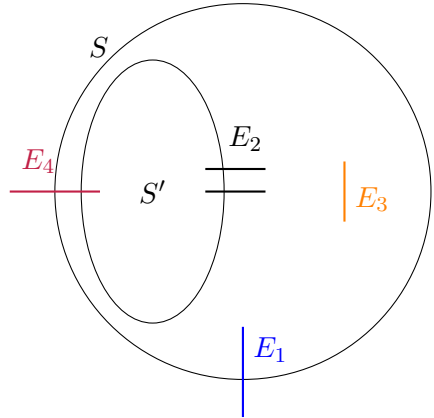
\begin{figure}[htb!]
\centering
\begin{tikzpicture}
\draw (0,0) node[above left=1.65cm and 1.65cm] {$S$} circle (2.5cm);
\draw (-1.2,0) node[ellipse, draw=black, minimum width = 1.9cm, 
    minimum height = 3.5cm] (C) {$S'$};
\draw[-, thick] (-0.5,0.3) -- (0.3,0.3) node[above left=0.1cm and -0.1cm] {$E_2$};
\draw[-, thick] (-0.5,0) -- (0.3,0) node {};

\draw[-, thick,blue] (0,-3) -- (0,-1.8) node[below right,text=blue] {$E_1$};

\draw[-, thick, orange] (1.35,0.4) -- (1.35,-0.4) node[above right,text=orange] {$E_3$};

\draw[-, thick, darkred] (-3.1,0) -- (-1.9,0) node[text=red] {} node[above left=0.1cm and 0.45cm] {$E_4$};;
\end{tikzpicture}
\caption{Setting for Case 3.}\label{fig:case3}
\end{figure}

        We need to find 1 token that has been given by edges directly to $S$, so that the $\frac{1}{2}$ excess token from $S'$ can be carried over as the excess of $S$.

        If $\delta(S') \subseteq \delta(S)$, then because no relaxation step was possible in line \ref{step:drop}, $\sum_{e \in \delta(S) \setminus \delta(S')}(1-x_e) \geq 2$. Since each edge in $\delta(S) \setminus \delta(S')$ contributes $(1-x_e)/2$ tokens, this gives us our token as necessary.
        Similarly, if $\delta(S') \supseteq \delta(S)$ we get the desired one token.

        So assume that $ \delta(S) \setminus \delta(S')$ and $\delta(S') \setminus \delta(S)$ are both nonempty.
        Let 
        \begin{align*}
            E_1 &:= \delta(S) \setminus \delta(S'),\\
            E_2 &:= \delta(S') \setminus \delta(S),\\
            E_3 &:= E_S \setminus \delta(S'), \text{ and}\\
            E_4 &:= \delta(S) \cap \delta(S').
        \end{align*}
        (See \Cref{fig:case3}.)

        Let $\delta \in [0,1)$ be the fractional part of $x(E_4)$.
        Note that the number of tokens assigned to $S$ is
        \begin{equation}\label{eq:tokencount} 
            |E_3| - x(E_3) + \tfrac12(|E_1| + |E_2| - x(E_1) - x(E_2)).
        \end{equation}
        Also observe that 
    \begin{equation}\label{eq:integral}
        x(E_2) + x(E_3), \quad x(E_1) + x(E_4), \quad \text{and} \quad x(E_2) + x(E_4)
    \end{equation}
        are all integer-valued, by tightness of the cut constraints and $\Ldeg$-alignment.
        We distinguish two subcases.
        \begin{itemize}
            \item $\delta = 0$. Then $x(E_1)$ and $x(E_2)$ are both integers, and moreover since $E_1$ and $E_2$ are nonempty, $|E_1| - x(E_1)$ and $|E_2| - x(E_2)$ are both positive integers. This already gives us the desired 1 token by \eqref{eq:tokencount}. %, and so also $x(E_3)$ is an integer. 

            \item $\delta > 0$. Then by \eqref{eq:integral} the fractional parts of $x(E_1)$ and $x(E_2)$ are both $1-\delta$, and the fractional part of $x(E_3)$ is then $\delta$.
                Thus $|E_1| - x(E_1) \geq \delta$ (being positive, with fractional part $\delta$); similarly, $|E_2| - x(E_2) \geq \delta$ and $|E_3| - x(E_3) \geq 1-\delta$.
                Substituting into \eqref{eq:tokencount}, we have at least $1-\delta + (2\delta)/2 = 1$ tokens assigned to $S$, as required.
        \end{itemize}
    \end{itemize}

    We have demonstrated that all sets in $\LdegB$ receive a full tokens; moreover, any maximal set in $\LdegB$ will have an extra token that is not needed, since it has no parent to give it to.
    So we have our desired contradiction: $|E| > |\Cbasis| + |\LdegB|$. 
   \end{proof}

\section{Conclusion}\label{sec:conclusion}

Besides the (strong) thin tree conjecture, our work leaves open several directions. One fascinating question is whether it is possible to leverage or strengthen our results to give a novel constant factor approximation algorithm for ATSP. While an algorithmic version of the strong thin tree conjecture is sufficient to give a constant factor approximation algorithm for ATSP, it is unclear if it is necessary: indeed, the current constant factor approximation algorithm for ATSP is not known to imply anything about thin trees. We ask if perhaps it is sufficient to focus on thinness for a laminar (or near laminar) family of cuts. 

 A second open question is whether it is possible to achieve a \textit{minimum cost} tree which violates the degree bounds in a laminar family by any constant factor. One would need to avoid the scaling currently present in our reduction. A natural relaxation of this question is to ask for a $1+\epsilon$ approximation for arbitrarily small $\epsilon$ as has been done for the chain case \cite{LS16}.
 
 Finally, we note that our results immediately give a thin tree with respect to the set of minimum cuts of any graph, and we believe it may be possible to extend it to the set of all $(1+\epsilon)$ \textit{near} minimum cuts for some small $\epsilon > 0$ using results from \cite{KKO21b}. We ask whether it is possible to extend our result to more general families of cuts such as the union of a constant number of laminar families or the set of cuts with at most $\alpha k$ edges in the graph for some constant $\alpha$ significantly larger than 1. 

\paragraph{Acknowledgments.} 
%We are grateful to many individuals for interesting and helpful discussions. We omit the details in this version of the submission to preserve anonymity.

%\begin{comment}
We thank Shayan Oveis Gharan for asking about thin trees for minimum cuts, discussions of which were crucial in the early stages of this project. 
We thank David Shmoys and Mohit Singh for asking the question about ATSP now stated in the conclusion.
N.O.\ thanks Michel Goemans and Rico Zenklusen for varied discussions on the topic of thin trees.
This work began during the Trimester Program on Combinatorial Optimization at 
the Hausdorff Research Institute for Mathematics;
we gratefully acknowledge the institute for its generous hosting and support.
%\end{comment}

\printbibliography

@misc{Goe12,
  title={Thin spanning trees},
  author={Goemans, Michel},
  year={2012},
  howpublished={Lecture at ``Graph Theory @ Georgia Tech: Conference Honoring the 50th Birthday of Robin Thomas''},
}

@Article{LOSZ20,
author={Linhares, Andr{\'e}
and Olver, Neil
and Swamy, Chaitanya
and Zenklusen, Rico},
title={Approximate multi-matroid intersection via iterative refinement},
journal={Mathematical Programming},
year={2020},
day={01},
volume={183},
number={1},
pages={397--418},
abstract={We introduce a new iterative rounding technique to round a point in a matroid polytope subject to further matroid constraints. This technique returns an independent set in one matroid with limited violations of the constraints of the other matroids. In addition to the classical steps of iterative relaxation approaches, we iteratively refine involved matroid constraints. This leads to more restrictive constraint systems whose structure can be exploited to prove the existence of constraints that can be dropped. Hence, throughout the iterations, we both tighten constraints and later relax them by dropping constraints under certain conditions. Due to the refinement step, we can deal with considerably more general constraint classes than existing iterative relaxation and rounding methods, which typically involve a single matroid polytope with additional simple cardinality constraints that do not overlap too much. We show that our rounding method, combined with an application of a matroid intersection algorithm, yields the first 2-approximation for finding a maximum-weight common independent set in 3 matroids. Moreover, our 2-approximation is LP-based and settles the integrality gap for the natural relaxation of the problem. Prior to our work, no upper bound better than 3 was known for the integrality gap, which followed from the greedy algorithm. We also discuss various other applications of our techniques, including an extension that allows us to handle a mixture of matroid and knapsack constraints.},
doi={10.1007/s10107-020-01524-y},
url={https://doi.org/10.1007/s10107-020-01524-y}
}

@Article{BKKNP13,
author={Bansal, Nikhil
and Khandekar, Rohit
and K{\"o}nemann, Jochen
and Nagarajan, Viswanath
and Peis, Britta},
title={On generalizations of network design problems with degree bounds},
journal={Mathematical Programming},
year={2013},
day={01},
volume={141},
number={1},
pages={479--506},
abstract={Iterative rounding and relaxation have arguably become the method of choice in dealing with unconstrained and constrained network design problems. In this paper we extend the scope of the iterative relaxation method in two directions: (1) by handling more complex degree constraints in the minimum spanning tree problem (namely laminar crossing spanning tree), and (2) by incorporating `degree bounds' in other combinatorial optimization problems such as matroid intersection and lattice polyhedra. We give new or improved approximation algorithms, hardness results, and integrality gaps for these problems. Our main result is a (1, b + O(log n))-approximation algorithm for the minimum crossing spanning tree (MCST) problem with laminar degree constraints. The laminar MCST problem is a natural generalization of the well-studied bounded-degree MST, and is a special case of general crossing spanning tree. We give an additive $\Omega$(logcm) hardness of approximation for general MCST, even in the absence of costs (c > 0 is a fixed constant, and m is the number of degree constraints). This also leads to a multiplicative $\Omega$(logcm) hardness of approximation for the robust k-median problem (Anthony et al. in Math Oper Res 35:79--101, 2010), improving over the previously known factor 2 hardness. We then consider the crossing contra-polymatroid intersection problem and obtain a (2, 2b + $\Delta$ − 1)-approximation algorithm, where $\Delta$ is the maximum element frequency. This models for example the degree-bounded spanning-set intersection in two matroids. Finally, we introduce the crossing latticep olyhedron problem, and obtain a (1, b + 2$\Delta$ − 1) approximation algorithm under certain condition. This result provides a unified framework and common generalization of various problems studied previously, such as degree bounded matroids.},
doi={10.1007/s10107-012-0537-8},
url={https://doi.org/10.1007/s10107-012-0537-8}
}

@INPROCEEDINGS{CVZ10,
  author={Chekuri, Chandra and Vondrak, Jan and Zenklusen, Rico},
  booktitle={Proceedings of the 51st Annual IEEE Symposium on Foundations of Computer Science (FOCS)}, 
  title={Dependent Randomized Rounding via Exchange Properties of Combinatorial Structures}, 
  year={2010},
  volume={},
  number={},
  pages={575--584},
  doi={10.1109/FOCS.2010.60}}

@article{AKS21,
author = {An, Hyung-Chan and Kleinberg, Robert and Shmoys, David B.},
title = {Approximation Algorithms for the Bottleneck Asymmetric Traveling Salesman Problem},
year = {2021},
issue_date = {October 2021},
volume = {17},
number = {4},
url = {https://doi.org/10.1145/3478537},
doi = {10.1145/3478537},
abstract = {We present the first nontrivial approximation algorithm for the bottleneck asymmetric traveling salesman problem. Given an asymmetric metric cost between n vertices, the problem is to find a Hamiltonian cycle that minimizes its bottleneck (or maximum-length edge) cost. We achieve an O(log n/ log log n) approximation performance guarantee by giving a novel algorithmic technique to shortcut Eulerian circuits while bounding the lengths of the shortcuts needed. This allows us to build on a related result of Asadpour, Goemans, M\k{a}dry, Oveis Gharan, and Saberi to obtain this guarantee. Furthermore, we show how our technique yields stronger approximation bounds in some cases, such as the bounded orientable genus case studied by Oveis Gharan and Saberi. We also explore the possibility of further improvement upon our main result through a comparison to the symmetric counterpart of the problem.},
journal = {ACM Transactions on Algorithms},
articleno = {35},
numpages = {12},
keywords = {bottleneck optimization, Approximation algorithms, traveling salesman problem}
}

@article{SL15,
   abstract = {In the Minimum Bounded Degree Spanning Tree problem, we are given an undirected graph G = (V , E) with a degree upper bound Bv on each vertex v ∈ V , and the task is to find a spanning tree of minimum cost that satisfies all the degree bounds. Let OPT be the cost of an optimal solution to this problem. In this article we present a polynomial-time algorithm which returns a spanning tree T of cost at most OPT and dT (v ) ≤ Bv + 1 for all v , where dT (v) denotes the degree of v in T . This generalizes a result of Fürer and Raghavachari [1994] to weighted graphs, and settles a conjecture of Goemans [2006] affirmatively. The algorithm generalizes when each vertex v has a degree lower bound Av and a degree upper bound Bv , and returns a spanning tree with cost at most OPT and Av - 1 ≤ dT (v ) ≤ Bv + 1 for all v ∈ V . This is essentially the best possible. The main technique used is an extension of the iterative rounding method introduced by Jain [2001] for the design of approximation algorithms.},
   author = {Mohit Singh and Lap Chi Lau},
   doi = {10.1145/2629366},
   number = {1},
   numpages=19,
   journal = {Journal of the ACM},
   keywords = {Approximation algorithms,Bounded degree,Iterative rounding,Spanning trees},
   publisher = {Association for Computing Machinery},
   title = {Approximating minimum bounded degree spanning trees to within one of optimal},
   volume = {62},
   year = {2015},
}

@InProceedings{Pri11,
author="Pritchard, David",
editor="Jansen, Klaus
and Solis-Oba, Roberto",
title="k-Edge-Connectivity: Approximation and LP Relaxation",
booktitle="Proceedings of the Ninth International Workshop on Approximation and Online Algorithms (WAOA)",
year="2011",
publisher="Springer Berlin Heidelberg",
pages="225--236",
doi={10.1007/978-3-642-18318-8_20},
}

@article{Bar92,
title = {Separating from the dominant of the spanning tree polytope},
journal = {Operations Research Letters},
volume = {12},
number = {4},
pages = {201-203},
year = {1992},
doi = {10.1016/0167-6377(92)90045-5},
url = {https://www.sciencedirect.com/science/article/pii/0167637792900455},
author = {Francisco Barahona},
keywords = {spanning tree polyhedron, separation problem},
abstract = {We study the separation problem for the partition inequalities that define the dominant of the spanning tree polytope of a graph G = (V, E). We show that a most violated inequality can be found by solving at most |V| maximum flow problems. Cunningham (1985) had solved this as a sequence of |E| maximum flow problems.}
}

@inproceedings{NZ19,
author = {N\"{a}gele, Martin and Zenklusen, Rico},
title = {A New Dynamic Programming Approach for Spanning Trees with Chain Constraints and Beyond},
year = {2019},
abstract = {},
booktitle = {Proceedings of the 30th Annual ACM-SIAM Symposium on Discrete Algorithms (SODA)},
pages = {1550--1569},
}

@article{KLS08,
author="Kir{\'a}ly, Tam{\'a}s
and Lau, Lap Chi
and Singh, Mohit",
title="Degree Bounded Matroids and Submodular Flows",
journal = {Combinatorica},
volume={32},
pages= {703--720},
year = {2012},
doi = {10.1007/s00493-012-2760-6}
}

@article{LS16,
author="Linhares, Andr{\'e} and Swamy, Chaitanya",
title="Approximating Min-Cost Chain-Constrained Spanning Trees: A Reduction from Weighted to Unweighted Problems",
journal = {Mathematical Programming},
volume={172},
pages={17--34},
year = {2018},
doi = {10.1007/s10107-017-1150-7}
}

@article{STV18,
author = {Svensson, Ola and Tarnawski, Jakub and V\'{e}gh, L\'{a}szl\'{o} A.},
title = {A Constant-Factor Approximation Algorithm for the Asymmetric Traveling Salesman Problem},
year = {2020},
volume = {67},
number = {6},
doi = {10.1145/3424306},
journal = {Journal of the ACM},
articleno = {37},
numpages = {53},
}

@article{Thom12,
title = {The weak 3-flow conjecture and the weak circular flow conjecture},
journal = {Journal of Combinatorial Theory, Series B},
volume = {102},
number = {2},
pages = {521-529},
year = {2012},
doi = {10.1016/j.jctb.2011.09.003},
url = {https://www.sciencedirect.com/science/article/pii/S0095895611000955},
author = {Carsten Thomassen},
keywords = {Orientations modulo , Star decomposition, 3-Flow conjecture},
abstract = {We show that, for each natural number k>1, every graph (possibly with multiple edges but with no loops) of edge-connectivity at least 2k2+k has an orientation with any prescribed outdegrees modulo k provided the prescribed outdegrees satisfy the obvious necessary conditions. For k=3 the edge-connectivity 8 suffices. This implies the weak 3-flow conjecture proposed in 1988 by Jaeger (a natural weakening of Tutteʼs 3-flow conjecture which is still open) and also a weakened version of the more general circular flow conjecture proposed by Jaeger in 1982. It also implies the tree-decomposition conjecture proposed in 2006 by Bárat and Thomassen when restricted to stars. Finally, it is the currently strongest partial result on the (2+ϵ)-flow conjecture by Goddyn and Seymour.}
}

@inproceedings{FR92,
author = {Fürer, Martin and Raghavachari, Balaji},
year = {1992},
pages = {317--324},
title = {Approximating the Minimum Degree Spanning Tree to Within One from the Optimal Degree},
booktitle = {Proceedings of the Third Annual ACM-SIAM Symposium on Discrete Algorithms (SODA)},
doi = {10.1145/139404.139469}
}

@inproceedings{OS11,
author = {{Oveis Gharan}, Shayan and Saberi, Amin},
title = {The Asymmetric Traveling Salesman Problem on Graphs with Bounded Genus},
year = {2011},
abstract = {We give a constant factor approximation algorithm for the asymmetric traveling salesman problem when the support graph of the solution of the Held-Karp linear programming relaxation has bounded orientable genus.},
booktitle = {Proceedings of the 22nd Annual ACM-SIAM Symposium on Discrete Algorithms (SODA)},
pages = {967--975},
}

@article{OZ13,
author="Olver, Neil
and Zenklusen, Rico",
title="Chain-Constrained Spanning Trees",
journal = {Mathematical Programming},
volume={167},
pages = {293--314},
year = {2018},
doi = {10.1007/s10107-017-1126-7}
}

@INPROCEEDINGS{AO15,
  author={Anari, Nima and {Oveis Gharan}, Shayan},
  booktitle={Proceedings of the 56th Annual IEEE Symposium on Foundations of Computer Science (FOCS)}, 
  title={Effective-Resistance-Reducing Flows, Spectrally Thin Trees, and Asymmetric TSP}, 
  year={2015},
  volume={},
  number={},
  pages={20--39},
  doi={10.1109/FOCS.2015.11}}

@inproceedings{HO14,
author = {Harvey, Nicholas J. A. and Olver, Neil},
title = {Pipage Rounding, Pessimistic Estimators and Matrix Concentration},
year = {2014},
abstract = {Pipage rounding is a dependent random sampling technique that has several interesting properties and diverse applications. One property that has been useful in applications is negative correlation of the resulting vector. There are some further properties that would be interesting to derive, but do not seem to follow from negative correlation. In particular, recent concentration results for sums of independent random matrices are not known to extend to a negatively dependent setting.We introduce a simple but useful technique called concavity of pessimistic estimators. This technique allows us to show concentration of submodular functions and concentration of matrix sums under pipage rounding. The former result answers a question of Chekuri et al. (2009). To prove the latter result, we derive a new variant of Lieb's celebrated concavity theorem in matrix analysis.We provide numerous applications of these results. One is to spectrally-thin trees, a spectral analog of the thin trees that played a crucial role in the recent breakthrough on the asymmetric traveling salesman problem. We show a polynomial time algorithm that, given a graph where every edge has effective conductance at least κ, returns an O(κ-1 · log n/log log n)-spectrally-thin tree. There are further applications to rounding of semidefinite programs and to a geometric question of extracting a nearly-orthonormal basis from an isotropic distribution.},
booktitle = {Proceedings of the 25th Annual ACM-SIAM Symposium on Discrete Algorithms (SODA)},
pages = {926--945},
}

@article{MSS13,
	Author = {Adam Marcus and Daniel A. Spielman and Nikhil Srivastava},
        journal = {Annals of Mathematics},
        pages = {327--350},
        volume={182},
        number={1},
	Title = {Interlacing Families II: Mixed Characteristic Polynomials and the Kadison-Singer Problem},
	Year = {2015}
}

@article{NW61,
	Author = {Nash-Williams, Crispin St John Alvah},
	Date-Added = {2013-07-12 16:06:45 -0700},
	Date-Modified = {2013-07-12 16:06:45 -0700},
	Journal = {Journal of the London Mathematical Society            },
	Pages = {445--45},
	Title = {Edge disjoint spanning trees of finite graphs},
	Volume = {36},
	Year = {1961}}

@inproceedings{Edm70,
	Author = {Edmonds, Jack},
	Booktitle = {Combinatorial Structures and Their Applications},
	Date-Added = {2013-07-10 13:23:47 -0700},
	Date-Modified = {2013-07-10 13:26:49 -0700},
	Pages = {69--87},
	Publisher = {Gordon and Breach},
	Title = {Submodular functions, matroids and certain polyhedra},
	Year = {1970},
	Bdsk-Url-1 = {http://dl.acm.org/citation.cfm?id=885909.885912}}

@unpublished{God04,
	Author = {Luis A. Goddyn},
	Date-Added = {2013-05-23 11:10:22 -0700},
	Date-Modified = {2013-07-24 17:05:03 -0700},
	Note = {Available at {\url{https://www.sfu.ca/~goddyn/Problems/problems.html}}},
	Title = {Some Open Problems I Like},
	Year = {2004}}

@inproceedings{Goe06,
	Author = {Michel X. Goemans},
	Booktitle = {Proceedings of the 47th Annual IEEE Symposium on Foundations of Computer Science (FOCS)},
	Date-Added = {2013-05-23 11:10:22 -0700},
	Date-Modified = {2013-07-31 14:30:42 -0700},
	doi = {10.1109/FOCS.2006.48},
	Pages = {273--282},
	Title = {Minimum Bounded Degree Spanning Trees},
	Year = {2006}}

@article{Jae84,
	Address = {North Holland},
	Author = {F. Jaeger},
	Date-Added = {2013-05-23 11:10:22 -0700},
	Date-Modified = {2013-05-23 11:10:22 -0700},
	Journal = {Colloquia Mathematica Societatis Janos Bolyai 37},
	Pages = {391--402},
	Title = {On circular flows in graphs, Finite and Infinite Sets},
	Year = 1984}

@book{Sch03,
	Author = {Alexander Schrijver},
	Classmath = {*05-02 52Bxx 68Q25 90C05 05B35},
	Date-Added = {2013-05-23 11:10:22 -0700},
	Date-Modified = {2013-07-17 15:57:36 -0700},
	Isbn = {978-3-540-44389-6},
	Language = {English},
	Publisher = {Springer},
	Series = {Algorithms and Combinatorics},
	Title = {Combinatorial Optimization},
	Volume = {2},
	Year = {2003}}

@inproceedings{KKO21b,
      title={A (Slightly) Improved Bound on the Integrality Gap of the Subtour LP for TSP}, 
      author={Anna Karlin and Nathan Klein and Shayan {Oveis Gharan}},
      year={2022},
      booktitle={Proceedings of the 63rd Annual IEEE Symposium on Foundations of Computer Science (FOCS)},
      pages = {844--855},
      doi = {10.1109/FOCS54457.2022.00084}
}

@article{AGMOS17,
  author    = {Arash Asadpour and
               Michel X. Goemans and
               Aleksander Madry and
               Shayan {Oveis Gharan} and
               Amin Saberi},
  title     = {An $O(\log n/\log \log n)$-Approximation Algorithm
               for the Asymmetric Traveling Salesman Problem},
  journal   = {Operations Research},
  volume    = {65},
  number    = {4},
  pages     = {1043--1061},
  year      = {2017},
  url       = {https://doi.org/10.1287/opre.2017.1603},
}

\end{document}